\def\ARXIVVERSION{1}
\def\PAPERWITHSUPPLEMENT{1}
\newif\ifrefereelayout
\refereelayoutfalse   % Public copy: single-spaced journal layout
\ifrefereelayout
  \documentclass[AEJ,reviewmode]{AEA}
\else
  \documentclass[AEJ,finalmode]{AEA}
\fi
\ifdefined\ARXIVVERSION
  \newcommand{\paperversionurl}{https://arxiv.org/pdf/2604.00874}
\else
  \newcommand{\paperversionurl}{https://cashman.science/assets/pdf/conditional-disclosure.pdf}
\fi
\newif\ifanonymizeforpeerreview
\anonymizeforpeerreviewfalse   % Named
\newif\ifbuildwithsupplement
\ifdefined\PAPERWITHSUPPLEMENT
  \buildwithsupplementtrue
\else
  \buildwithsupplementfalse
\fi
\usepackage{natbib}
\usepackage{amsfonts}
\usepackage{amssymb}
\usepackage{bm}
\usepackage{booktabs}
\usepackage{array}
\usepackage{graphicx}
\usepackage{refcount}
\usepackage{xr}
\usepackage{subfiles}
\usepackage[colorlinks=true,allcolors=blue]{hyperref}
\ifrefereelayout
  \usepackage{aej-review-12pt}
\fi

\makeatletter
\long\def\XR@test#1#2#3#4\XR@{%
  \let\XR@tempa\@gobbletwo
  \ifx#1\newlabel
    \let\XR@tempa\@firstoftwo
  \else\ifx#1\@input
    \let\XR@tempa\@secondoftwo
  \fi\fi
  \XR@tempa{#1{\XR@prefix#2}{#3}}{\edef\XR@list{\XR@list#2\relax}}%
  \ifeof\@inputcheck\expandafter\XR@aux
  \else\expandafter\XR@read\fi}
\makeatother

\makeatletter
\let\AC@original@appsect\@appsect
\def\@appsect#1#2#3#4#5#6[#7]#8{%
  \NR@gettitle{#7}%
  \AC@original@appsect{#1}{#2}{#3}{#4}{#5}{#6}[{#7}]{#8}%
}
\makeatother

\makeatletter
\def\ps@headings{%
  \leftskip=0pt\rightskip=0pt
  \let\@oddfoot\@empty
  \let\@evenfoot\@empty
  \def\@oddhead{{\small\textit{\MakeUppercase{\hfil\@shortTitle\hfil\thepage}}}}%
  \def\@evenhead{{\small\textit{\MakeUppercase{\thepage\hfil\@shortTitle\hfil}}}}%
}
\makeatother

\newtheorem{assumption}{Assumption}

\newtheorem{lemma}{Lemma}
\newtheorem{proposition}{Proposition}
\newtheorem{theorem}{Theorem}
\newtheorem{corollary}{Corollary}
\newtheorem{remark}{Remark}

\newcommand{\E}{\mathbb{E}}
\newcommand{\Prob}{\mathbb{P}}

\newcommand{\ind}{\mathbf{1}}
\newcommand{\appendixparttitle}[1]{%
  \clearpage
  \begin{center}
  {\scshape\sectionsize #1}
  \end{center}
  \vspace{12pt}
}

\ifSubfilesClassLoaded{%
  \externaldocument{conditional_disclosure}%
}{%
  \ifbuildwithsupplement\else
    \externaldocument{conditional_disclosure_online_appendix}%
  \fi
}

\providecommand{\theHsection}{}
\providecommand{\theHsubsection}{}
\providecommand{\theHequation}{}
\providecommand{\theHfigure}{}
\providecommand{\theHtable}{}
\providecommand{\theHassumption}{}
\providecommand{\theHdefinition}{}
\providecommand{\theHlemma}{}
\providecommand{\theHproposition}{}
\providecommand{\theHtheorem}{}
\providecommand{\theHcorollary}{}
\providecommand{\theHremark}{}
\newcommand{\startsupplement}{%
  \appendix
  \setcounter{section}{1}%
  \setcounter{subsection}{0}%
  \counterwithin*{assumption}{section}%
  \counterwithin*{definition}{section}%
  \counterwithin*{lemma}{section}%
  \counterwithin*{proposition}{section}%
  \counterwithin*{theorem}{section}%
  \counterwithin*{corollary}{section}%
  \counterwithin*{remark}{section}%
  \renewcommand{\theassumption}{\thesection\arabic{assumption}}%
  \renewcommand{\thedefinition}{\thesection\arabic{definition}}%
  \renewcommand{\thelemma}{\thesection\arabic{lemma}}%
  \renewcommand{\theproposition}{\thesection\arabic{proposition}}%
  \renewcommand{\thetheorem}{\thesection\arabic{theorem}}%
  \renewcommand{\thecorollary}{\thesection\arabic{corollary}}%
  \renewcommand{\theremark}{\thesection\arabic{remark}}%
  \renewcommand{\theHsection}{supplement.\arabic{section}}%
  \renewcommand{\theHsubsection}{supplement.\arabic{section}.\arabic{subsection}}%
  \renewcommand{\theHequation}{supplement.\arabic{section}.\arabic{equation}}%
  \renewcommand{\theHfigure}{supplement.\arabic{section}.\arabic{figure}}%
  \renewcommand{\theHtable}{supplement.\arabic{section}.\arabic{table}}%
  \renewcommand{\theHassumption}{supplement.\arabic{section}.\arabic{assumption}}%
  \renewcommand{\theHdefinition}{supplement.\arabic{section}.\arabic{definition}}%
  \renewcommand{\theHlemma}{supplement.\arabic{section}.\arabic{lemma}}%
  \renewcommand{\theHproposition}{supplement.\arabic{section}.\arabic{proposition}}%
  \renewcommand{\theHtheorem}{supplement.\arabic{section}.\arabic{theorem}}%
  \renewcommand{\theHcorollary}{supplement.\arabic{section}.\arabic{corollary}}%
  \renewcommand{\theHremark}{supplement.\arabic{section}.\arabic{remark}}%
}

\draftSpacing{1.5}

\begin{document}

\title{Conditional Disclosure as a Coordination Device}
\shortTitle{Conditional Disclosure as a Coordination Device}
\issueName{}
\ifanonymizeforpeerreview
\author{Anonymous}
\else
\author{Matthew Cashman\thanks{Cashman: MIT Sloan School of Management, cashman@mit.edu. I wish to thank Adam Bear, Rahul Bhui, Caspar Kaiser, Drazen Prelec, Birger Wernerfelt, Zachary Wojotowicz, and Erez Yoeli for many helpful comments and suggestions. I am especially indebted to Alex Tabarrok for his substantial feedback.}\\[6pt]
\normalfont\normalsize \today\ (\href{\paperversionurl}{latest version here})}
\fi

\begin{abstract}
People with controversial views may stay silent rather than speak out alone, though in a large enough group they could speak safely together. A social assurance contract makes disclosure conditional: signed commitments of support stay private until enough people have signed for it to be safe to publish. In contrast, a survey can measure hidden support but cannot protect anyone who acts on it. I derive the contract's exact advantage over speaking openly, and when that advantage disappears. The contract can turn a few people who would speak alone into a larger coalition.
\end{abstract}

\JEL{C72, D82, D83}
\Keywords{Assurance contracts, disclosure design, equilibrium selection, global games, coordination, failure privacy}

\maketitle

People can agree in private yet stay silent in public. They may be willing to join a large public coalition but unwilling to be identified in a small one, even when they know that others share their view. This differs from pluralistic ignorance, in which people misjudge others' private views \citep{prenticePluralisticIgnoranceAlcohol1993,prenticePluralisticIgnorancePerpetuation1996}. Preference falsification likewise explains why private views and public statements diverge \citep{kuranPrivateTruthsPublic1997}. Measuring hidden support does not by itself solve this coordination problem. People need a way to make their names public only if enough others do so with them.

A \emph{Social Assurance Contract} provides that guarantee. Its architect fixes the statement, who is eligible to sign, and the condition for publication. People submit endorsements: signed permission to publish that exact statement under their names. The administrator holds those endorsements privately and publishes the names together only if the condition is met. Otherwise, it publishes no names. The contract holds permission to disclose identities in escrow, much as a provision-point contract holds money until a funding target is reached. A university contract might read:

\begin{quote}
\textsc{We, the undersigned, believe [controversial belief] and think the University should take the following steps [1, 2, 3, \ldots]. Signatures on this letter will become public \textit{only when at least [200] faculty have signed.} Until then, no one can see who has signed. If the threshold is not met, the letter and its signatures are destroyed.}
\end{quote}

Anonymous measurement, open expression, and conditional release do different jobs. A survey or secret ballot reports an aggregate without naming respondents. A \emph{credible poll} is the authenticated version: it verifies endorsements from a specified eligible population and publishes a certified count rather than names. These institutions work when someone can act on the aggregate, but they do not assemble a coalition willing to stand publicly behind a statement. Open expression creates that coalition but names participants even when few people join. A Social Assurance Contract keeps endorsements in private custody and jointly publishes them only when the resulting coalition is large enough. I study the risk this publication condition removes: being exposed after participation remains low and coordination fails.

Social Assurance Contracts apply the threshold logic of step-level public goods to identity disclosure. Provision-point mechanisms refund contributions when a funding threshold fails \citep{bagnoliProvisionPublicGoods1989}; dominant assurance contracts add refund bonuses \citep{tabarrokPrivateProvisionPublic1998}; and \citet{strauszTheoryCrowdfundingMechanism2017} studies provision-point crowdfunding. Here the mechanism holds endorsements rather than money. Its closest institutional antecedent is the information escrow of \citet{ayresInformationEscrows2012}, which stores allegations and releases them under specified corroboration conditions. \citet{braghieriThresholdDisclosureCollective2026} also study threshold identity disclosure, but let each voter choose when her own vote--identity pair becomes visible. A Social Assurance Contract instead releases one named coalition as a unit.

The model focuses on material retaliation: costs that an employer, professional group, or other opponent imposes on a named participant. Both institutions begin with the same group committed to participation, the \emph{vanguard}. People outside that group are \emph{marginal agents}: changing the institution can change whether they join. They have the same preferences but receive different private signals about how favorable conditions are for a coalition. Under either institution, retaliation attaches to membership in the published coalition and weakens in the same way as the coalition grows. This cost channel is consistent with evidence that lower retaliation costs increase employee complaints \citep{heeseEffectRetaliationCosts2021}. Holding the audience's beliefs fixed isolates the effect of the publication condition and the cost of signing the contract. Social-image models offer another reason why visible isolation is costly \citep{bernheimTheoryConformity1994,braghieriPoliticalCorrectnessSocial2024}; here, the comparison holds those inferences fixed. Garbling protects individual whistleblowers in \citet{chassangCrimeIntimidationWhistleblowing2019}; conditional release protects them by delaying identification until a coalition forms.

Vanguard participation is committed before the marginal agents choose. Its members may be protected or already public; they express under the open institution and submit endorsements under the contract. Under the contract, their commitment need not yet be public. This timing distinguishes the simultaneous benchmark here from the companion paper's sequential recruitment, where the vanguard is already public. In favorable states, the vanguard alone clears the publication threshold. Joining then pays even if no other marginal agent joins. This gives the selection argument its high-state starting point, called an upper dominance region. Conditional release amplifies an existing vanguard into a larger coalition; it requires that vanguard to be credible and the threshold to be reachable.

Which outcome should we expect when people have nearly the same information about conditions but remain uncertain about what others know? The main result answers this question by giving agents private signals with small errors and applying the same reasoning to both institutions. The resulting comparison averages payoffs as marginal participation varies from none to all. A \emph{rank} is a point on this conceptual path: the fraction of marginal agents who participate, not their arrival order in a campaign. The \emph{rank-indifference value} is the value of public expression that makes joining break even across these ranks. The two institutions produce different values because the contract keeps signers private below the threshold.

Their difference is the \emph{accounting gap}: the advantage from removing exposure between the vanguard and threshold, less the signing cost adjusted for the chance of publication. Signers pay even on failure, so the benefit they receive on success must cover that cost. The less likely publication is, the larger this cost adjustment becomes. The selection theorem also requires the threshold to protect the people the contract recruits. Being named in a coalition that just reaches the threshold must be at least as valuable to them as remaining unnamed. The \emph{selected gap} is therefore the smaller of the accounting gap and the room left for recruitment once this protection requirement is met. Within the theorem's domain, it measures how far the contract lowers the value of public expression needed to participate. The gain is positive when avoided exposure outweighs adjusted signing costs and a protective, reachable threshold can recruit people who would stay silent under open expression.

The selection argument follows the global-games approach \citep{carlssonGlobalGamesEquilibrium1993,morrisGlobalGamesTheory2003,frankelEquilibriumSelectionGlobal2003}. Agents first rule out actions that lose whatever others do. They then rule out actions that lose given what others have already ruled out, and continue this reasoning. As the errors in private signals shrink, the surviving choices approach a common boundary: participate above it and stay out below it. I apply this procedure to both institutions, whose complete-information versions can have several equilibria.

Global games have been used to study bank runs \citep{goldsteinDemandDepositContracts2005}, targeted coordination subsidies \citep{sakovicsWhoMattersCoordination2012}, discriminatory success-contingent rewards \citep{winterIncentivesDiscrimination2004,halacRaisingCapitalHeterogeneous2020}, and information manipulation \citep{edmondInformationManipulationCoordination2013}. Conditional disclosure changes whether a participant is identified after failure. The exposure-tail decomposition measures the benefit of that change. The result requires strategic complementarity; turnout can instead be substitutable, as in the Hong Kong evidence of \citet*{cantoniProtestsStrategicGames2019}.

Anonymous measurement offers a different form of protection. Randomized response uses noise to measure sensitive attitudes \citep{warnerRandomizedResponseSurvey1965}. Recent work examines respondents' incentives under this mechanism. \citet{blumeElicitingPrivateInformation2019} show that noise can support truthful reporting by giving respondents plausible deniability, but can also produce informative equilibria with systematic misreporting. \citet{fisherDesigningRandomizedResponse2023} show that, when respondents' traits are correlated through an unknown population rate, a larger sample makes any one answer less important for beliefs about its author. The researcher can then use less noise while preserving truthful reporting. These mechanisms protect respondents by limiting what others can infer from their answers. A Social Assurance Contract instead publishes an exact named coalition when it is large enough to be safe.

A credible poll verifies eligible endorsements and certifies their count, but still publishes no coalition. A secret ballot likewise works when an institution can act on an anonymous aggregate. Information about social norms can also change behavior \citep{bursztynMisperceivedSocialNorms2020}. But learning that others agree does not protect someone who is named in a small coalition. At the same state and audience-belief index, simultaneous open expression can sustain either vanguard-only or full expression. Holding public information fixed lets the comparison measure the gain from withholding signer identities after failure. Publication may also create common knowledge, depending on who observes it and how they communicate \citep{chweStructureStrategyCollective1999}; that separate effect is held fixed here.

The design results explain why the publication rule takes the form of a threshold. Consider rules that depend only on coalition size and publish either every endorsement or none. A threshold at the minimum safe size releases the list at every size that protects its signers and withholds it at smaller sizes. Within this class, no safe rule can release more often. A protective threshold also preserves the incentive to join as participation rises. To deliver that protection, an independent administrator or a protocol must keep endorsements private and verify that the threshold has been reached before releasing names. If failed lists are published with the same exposure and expressive value as open expression, the contract instead reproduces open expression plus its signing cost.

Two implementation threats require different responses. Doubt about whether the contract is genuine creates a minimum level of trust needed to sign. Endorsers may also repudiate the statement immediately after publication. When only continuing supporters protect one another, a credible bound on recantation determines how many extra names a safe release needs. I derive both margins. The Supplemental Appendix also studies a finite group, where one endorsement can determine whether the threshold is reached. Agents begin with prior beliefs and receive private signals; signing costs and publication risk shape their choices around a vanguard that clears the threshold in favorable states. Applications include collective whistleblowing, named organizing committees, and public letters with a credible vanguard, a reachable threshold, and exposure that falls as the final coalition grows.

A companion paper asks which coalition can safely be published when signers differ in their protection. Starting from a set $V$ of valid endorsements, it finds a safely releasable coalition $D(V)$ \citep{cashmanSafeCollectiveExpression2026}. It studies how the identities of coalition members affect safety, when recruitment can proceed without risk, and how risk tolerance and leakage change the coalitions that can form. This paper asks what induces people to enter $V$ in the first place: marginal agents share one payoff type, receive different private signals, and face the same audience under the contract and simultaneous open expression. The papers address complementary questions under different assumptions; neither model is a special case of the other.

Section~\ref{sec:conditional_coalition_model} defines the institution and model. Section~\ref{sec:mechanism_comparison} derives the participation comparison and its implications for contract design. Section~\ref{sec:applications} considers applications and tests, and Section~\ref{sec:discussion} concludes. Appendix A contains proofs. The Supplemental Appendix develops the finite benchmark, separates failure privacy from attempt privacy, states the exchangeability lemma, derives the trust and recantation bounds, and describes an implementation architecture.

\section{Conditional Public Coalition Formation}\label{sec:conditional_coalition_model}

The comparison asks how safety in numbers changes the decision to join: a larger final coalition lowers the cost of being named. To isolate that effect, marginal agents have the same preferences and face the same exposure environment in both institutions. I approximate a large population by a continuum, treating each individual as too small to change coalition size alone even though their choices collectively determine it. The Supplemental Appendix returns to a finite group, where one signature can be pivotal.

\subsection{The Finite Institution}

Private support, signing, and public membership are three different statuses. A supporter may never sign, and a signer becomes public only if the contract meets its publication condition. The finite institution makes these distinctions explicit.

Let $N$ be a fixed finite eligible population. The supporter set $S\subseteq N$ contains the people willing to be published if the published coalition protects them. A supporter has not necessarily signed. Let $V\subseteq S$ be the people whose valid endorsements the administrator has received, and let $Q=|V|$. Everyone has unit mass in this finite description, so a coalition $C$ has mass $m(C)=|C|$. Given an integer publication threshold $K$, the release rule is
\begin{equation}
D_K(V)=
\begin{cases}
V, & |V|\ge K,\\
\varnothing, & |V|<K.
\end{cases}
\end{equation}
Thus the contract jointly publishes the complete signer set when the publication condition is met; the resulting public coalition is $C=D_K(V)$. Otherwise, it publishes no names.

Failure privacy means that $D_K(V)=\varnothing$ when $|V|<K$: no signer's identity is published. That definition does not by itself say whether outsiders observe the campaign or learn that it failed. The ideal benchmark adds \emph{attempt privacy} and withholds the failed count, so the audience imposing exposure costs observes
\begin{equation}
o=
\begin{cases}
V, & |V|\ge K,\\
\varnothing, & |V|<K,
\end{cases}
\end{equation}
where $\varnothing$ is a public non-event: the audience sees no names, count, or failure announcement. The Supplemental Appendix studies the distinction between failure privacy and attempt privacy.

In the strategic finite benchmark, the vanguard contributes $b(\theta)$ endorsements regardless of the marginal agents' choices. The remaining signers use one signal cutoff: each signs when her private signal exceeds a common boundary. A single marginal endorsement can determine whether total endorsements reach $K$. Higher signing friction raises the cutoff and weakly lowers publication probability in every state. The continuum model below provides the comparison with simultaneous open expression.

\subsection{The Vanguard Analytical Benchmark}

How far can participation expand beyond the vanguard? The answer depends both on the vanguard's size and on how many additional supporters are available to join. Both quantities vary with the state.

The state $\theta\in\mathbb{R}$ summarizes conditions that make a public coalition more or less viable, such as latent support or the availability of protected participants. The vanguard has mass $b(\theta)$ and participates under every institution. The total available mass is $Y(\theta)$, so $Y(\theta)-b(\theta)$ is the marginal pool whose participation can change with the mechanism. The normalization
$0\le b(\theta)<Y(\theta)\le1$ expresses all coalition sizes as shares of the relevant population. Both $b$ and $Y$ are continuously differentiable and weakly increasing: more favorable states never shrink either the vanguard or the available pool.

The marginal supporters form an atomless population $\mathcal{I}=[0,1]$ with normalized mass one: each individual has zero mass, and together they fill the unit interval. They have the same preferences. Their expressive benefit $e$ is the value of having their name publicly attached to the statement, and $\alpha\ge0$ measures their sensitivity to exposure. If a fraction $w\in[0,1]$ participates, the candidate coalition has mass
\begin{equation}
y(w;\theta)=b(\theta)+\bigl(Y(\theta)-b(\theta)\bigr)w.
\end{equation}
At $w=0$, only the vanguard participates, so candidate mass is $b(\theta)$. At $w=1$, everyone available joins, so candidate mass is $Y(\theta)$. The unit interval thus measures the fraction of the marginal pool that joins, while coalition mass measures its size in the full population. Intermediate values describe different possible participation levels, not a sequence in time. Under simultaneous open expression, every participant enters the public coalition. Under the contract, the administrator holds the corresponding signer set $V$ in private custody and publishes $C=D(V)$ only when the publication condition is met. I use $Y$ for participation capacity and $y$ for candidate coalition mass.

The audience-belief index $\mu$ summarizes the public environment facing a signer, including beliefs formed from prior information or updated after a survey. For example, it can describe how sympathetic an employer or professional audience is. The signer takes these beliefs as given. The function $\ell(y,\mu)$ is the baseline exposure cost of appearing in a coalition of mass $y$. A signer with sensitivity $\alpha$ pays $\alpha\ell(y,\mu)$. Exposure is bounded, continuous, nonnegative, and strictly decreasing in $y$. This last condition is safety in numbers: adding names lowers each signer's cost. A more favorable audience-belief index also lowers exposure. I hold $\mu$ fixed across institutions to isolate the protection that a larger coalition supplies even when the audience's beliefs stay the same. The Supplemental Appendix discusses the separate case with $\mu(y)$, where beliefs change with coalition size. Here, a larger coalition spreads or constrains the punishment directed at each member.

Marginal agents share the type $(e,\alpha)$ but have different information. Agent $i$ observes a private signal
\begin{equation}
x_i=\theta+\sigma\eta_i.
\end{equation}
The noise term $\eta_i$ creates disagreement across agents, and $\sigma>0$ controls its size. A higher $x_i$ points to a more favorable state. Conditional on $\theta$, the shocks are independent and identically distributed with a continuous, strictly increasing distribution $F_\eta$ and positive density $f_\eta=F_\eta'$. The signals satisfy the monotone likelihood-ratio property: a higher signal shifts relative likelihood toward higher states.

A pure strategy is a jointly measurable map $\varphi:\mathcal{I}\times\mathbb{R}\to\{0,1\}$, where $\varphi_i(x)=1$ means that agent $i$ joins after signal $x$. Profiles are ordered pointwise. The benchmark imposes \emph{conditional exact aggregation}, the continuum version of a law of large numbers: at each state, the realized participation share equals the share predicted by the signal distribution,
\begin{equation}
w_\varphi(\theta)=\int_{\mathcal{I}}\int_{\mathbb R}\varphi_i(\theta+\sigma\eta)\,dF_\eta(\eta)\,di.
\end{equation}
Under a symmetric cutoff strategy $\varphi_i(x)=\mathbf 1\{x\ge c\}$, exact aggregation becomes
\begin{equation}
w_c(\theta)=1-F_\eta\!\left(\frac{c-\theta}{\sigma}\right).
\end{equation}
This is both the probability that a signal drawn in state $\theta$ exceeds the cutoff and the realized participation share in the continuum. Exact aggregation supports the selection analysis; the Supplemental Appendix instead uses a finite population. Public information, including $\mu$, stays fixed as private noise $\sigma$ vanishes.

Any two coalitions of the same size create the same exposure cost, regardless of who belongs to them. This is the assumption of \emph{exchangeable identities}. Size determines the cost; identities determine who bears it. When seniority, subgroup, or network position changes protection, the membership of the coalition also matters. The companion paper studies that setting.

\subsection{Benchmark Mechanisms}

The institutions differ in what they collect and what they make public (Table~\ref{tab:mechanism_classification}). Anonymous measurement can change the audience's beliefs before people decide whether to participate, but it need not verify or hold publishable endorsements. A credible poll does both, yet publishes only a certified count. Neither releases a named coalition.

\begin{table}[t]
\caption{Mechanism Classification}
\label{tab:mechanism_classification}
\centering
\small
\begin{tabular}{@{}p{0.19\textwidth}p{0.23\textwidth}p{0.30\textwidth}p{0.14\textwidth}@{}}
\toprule
Mechanism & Public output & Identity disclosure & Private custody \\
\midrule
Simultaneous open expression & Named participants & Every participant is named & No \\
Anonymous measurement & Aggregate estimate or vote tally & No respondent is directly named & Not necessarily \\
Credible poll & Authenticated count & No signer is directly named & Yes \\
Social Assurance Contract & Named coalition & All signers jointly on success; none on failure & Yes \\
\bottomrule
\end{tabular}
\end{table}

Under simultaneous open expression, marginal agents choose without observing other marginal choices, and every participant is named. The companion paper instead studies sequential public organizing, where later participants can observe earlier choices. A Social Assurance Contract receives endorsements privately, publishes the complete signer set together when its mass reaches $\tau$, and otherwise publishes no names. Signing costs $k\ge0$. The baseline assumes that the administrator credibly follows this custody and release rule.

\section{Mechanism Comparison}\label{sec:mechanism_comparison}

What changes when people can make identification conditional on a large enough coalition? To answer that question, I hold the vanguard, audience beliefs, private signals, and selection procedure fixed. A named participant's exposure depends on coalition size in the same way under either institution. I call the institutions the two \emph{arms} of the comparison. A survey changes behavior through $\mu$ or beliefs about the state; conditional disclosure changes what happens to a participant after coordination fails.

\subsection{Selection, Not Information}\label{subsec:selection_not_information}

The complete-information benchmark reveals the coordination problem that private signals will resolve. Even when everyone observes the state and audience, each person's best action depends on how many others participate.

\begin{theorem}[Open-expression multiplicity]\label{thm:no_selection}
Fix a state $\theta$ and any audience belief $\mu$. In the simultaneous open-expression game among marginal agents with a participating vanguard of mass $b(\theta)$, the gain from expression rises with the expressing mass. If the marginal type satisfies
\begin{equation}
\alpha\,\ell\bigl(Y(\theta),\mu\bigr)\ \le\ e\ <\ \alpha\,\ell\bigl(b(\theta),\mu\bigr),
\end{equation}
then both the vanguard-only profile, in which no marginal agent expresses, and the full-expression profile are equilibria. The band is nondegenerate whenever $\alpha>0$, $b(\theta)<Y(\theta)$, and $\ell$ is strictly decreasing in $y$. An informational intervention can move $\mu$ and move a type into or out of this band. At every fixed state and index inside it, the open-expression game remains multiple.
\end{theorem}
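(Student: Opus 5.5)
The plan is to write down the complete-information payoff of a single marginal agent and verify each claim directly, since with a continuum of atomless agents no individual choice moves the coalition mass. Fix $\theta$ and $\mu$. Under simultaneous open expression, if a fraction $w\in[0,1]$ of marginal agents expresses, the public coalition has mass $y(w;\theta)=b(\theta)+(Y(\theta)-b(\theta))w$. An expressing agent then receives
\begin{equation}
G(w)=e-\alpha\,\ell\bigl(y(w;\theta),\mu\bigr),
\end{equation}
and staying silent yields zero. Because each agent has zero mass, her deviation leaves $w$, and hence $y$, unchanged. The gain from expression is therefore exactly $G(w)$.

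For the first claim, note that $y(\cdot;\theta)$ is weakly increasing in $w$, strictly so when $b(\theta)<Y(\theta)$. Since $\ell$ is strictly decreasing in $y$ and $\alpha\ge0$, $G$ is weakly increasing in $w$, which is the required strategic complementarity.

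For the equilibrium claims, I would check the two profiles separately.
\begin{itemize}
\item At the vanguard-only profile, $w=0$ and $y=b(\theta)$. A deviator gains $e-\alpha\ell(b(\theta),\mu)<0$ by the strict right inequality, so no marginal agent wants to express.
\item At the full-expression profile, $w=1$ and $y=Y(\theta)$. Expressing yields $e-\alpha\ell(Y(\theta),\mu)\ge0$ by the left inequality, so no agent strictly prefers silence. Weak indifference at the boundary is allowed, since the theorem's band is closed on the left.
\end{itemize}
Exact aggregation is trivially satisfied under both degenerate profiles.

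For nondegeneracy, $\alpha>0$, $b(\theta)<Y(\theta)$, and strict monotonicity of $\ell$ give $\alpha\ell(Y(\theta),\mu)<\alpha\ell(b(\theta),\mu)$. The interval of $e$ values is therefore a nonempty half-open interval.

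For the last two sentences, the key observation is that the band's endpoints depend only on $(\theta,\mu)$. An informational intervention that shifts $\mu$ moves these endpoints, because $\ell$ is decreasing in $\mu$. It can therefore carry a fixed $e$ into or out of the band. However, once $(\theta,\mu)$ is fixed with $e$ inside the band, the argument above applies verbatim, so both equilibria persist.

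I expect no step to be a real obstacle. The only point requiring care is the zero-mass (non-pivotality) justification: one must state explicitly that a single deviation does not change $y$. One must also note that the left endpoint uses weak inequality, so an indifferent agent at full expression is content to express.
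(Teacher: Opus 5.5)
Your proposal is correct and follows the paper's proof: it checks each profile against the band inequalities. At the vanguard-only profile, expression pays $e-\alpha\ell(b(\theta),\mu)<0$; at full expression, it pays $e-\alpha\ell(Y(\theta),\mu)\ge0$. Your explicit non-pivotality remark and nondegeneracy check fill in details the paper leaves implicit.
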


Different expectations sustain the two equilibria. If every marginal agent stays silent, a lone deviation leaves one person exposed beside only the vanguard, so deviation does not pay. If everyone speaks, the large coalition lowers exposure enough to make speaking worthwhile. The state and audience are identical; only expectations about others differ.

Small differences in private signals select between these outcomes. Under open expression, a participant always receives expressive benefit $e$ and bears exposure. Under the contract, benefit and exposure arise only when the coalition reaches $\tau$, while the signer pays friction $k$ whether the contract succeeds or fails:
\begin{equation}
\begin{gathered}
u_{open}(\theta,w)=e-\alpha\,\ell\bigl(y(w;\theta),\mu\bigr),\\
u_{SAC}(\theta,w)=\mathbf{1}\{y(w;\theta)\ge\tau\}\bigl(e-\alpha\,\ell(y(w;\theta),\mu)\bigr)-k .
\end{gathered}
\end{equation}
The cost $k$ covers the extra effort of verification, custody, and establishing trust in the platform. It is paid only under the contract. Open expression has no such added cost in the baseline, but participants still bear the exposure cost shown in its payoff.

The reasoning starts with decisions that do not depend on anyone else's choices. At low enough signals, staying out must be best even if everyone else joins. At high enough signals, joining must be best even if no other marginal agent joins. The assumption below establishes these two dominance regions and the regularity conditions needed to connect them. Its first six clauses require failure in bad states, success supplied by the vanguard in good states, protection at the threshold, and a positive cost of signing even when failure is certain.

\begin{assumption}[Vanguard-based selection domain]\label{ass:operative_domain}
There are states $\theta_L<\theta_H$ and an interior evaluation state $\theta\in(\theta_L,\theta_H)$ with $b(\theta)<\tau<Y(\theta)$ such that:
\begin{enumerate}
\item \emph{$\tau$-protection:} $e\ge\alpha\,\ell(\tau,\mu)$ --- a coalition at the threshold protects the marginal type. The threshold must protect whom it recruits.
\item \emph{Low-state exposure:} $e<\alpha\,\ell(Y(\theta_L),\mu)$ --- at low states even full participation does not protect, so the type faces a coordination problem.
\item \emph{Genuine failure risk:} $\tau>Y(\theta_L)$ --- the threshold is unreachable at low states.
\item \emph{Vanguard clearance:} $b(\theta_H)\ge\tau$ --- at high states the threshold is reachable by the vanguard alone.
\item \emph{Vanguard protectability with friction:} $e>\alpha\,\ell(b(\theta_H),\mu)+k$ --- at high states the vanguard alone makes participation worthwhile.
\item \emph{Strict contract friction:} $k>0$, which makes nonparticipation strictly dominant when contract failure is certain.
\end{enumerate}
For each arm $r\in\{open,SAC\}$, the rank integral
\begin{equation}
R_r(s)=\int_0^1u_r(s,w)\,dw
\end{equation}
averages the participation payoff over all ranks $w\in[0,1]$ at state $s$. It asks whether joining pays on average as marginal participation moves from none to full. The integral is continuous and strictly single crossing on the relevant state range, with a unique zero $\theta_r^*\in(\theta_L,\theta_H)$.

The signal family satisfies the monotone-likelihood-ratio property. Beliefs after a signal can take either of two forms. In the diffuse calculation, an agent with signal $x$ represents the state as $\theta=x-\sigma\eta$, where $\eta\sim F_\eta$; this kernel corresponds to an improper flat prior. Alternatively, the state has a proper common prior with bounded support and continuously differentiable density $g$. Both forms produce the same small-noise rank calculation. For either form, a compact interval $\mathcal{J}$ contains the dominance states, crossing range, and all cutoff boundaries below when $\sigma$ is small. In the proper-prior case, $\mathcal{J}$ lies inside the prior's support and $\inf_{s\in\mathcal{J}}g(s)>0$, so every state in the interval has positive prior density.

The dominance claims do not depend on unresolved choices by others. For small enough $\sigma$, there are signals $x_L<x_H$ in $\mathcal{J}$ such that staying out is a strict best response below $x_L$ and joining is a strict best response above $x_H$, whatever strategies remain available to others. Iterative reasoning starts from these footholds. Every boundary generated by the two extreme cutoff iterations remains in $\mathcal{J}$. Public information stays fixed as $\sigma$ approaches zero.

The type-dependent clauses can all hold only if the signing cost is smaller than the improvement in exposure between the unfavorable and favorable benchmark states:
\begin{equation}
k\ <\ \alpha\bigl(\ell(Y(\theta_L),\mu)-\ell(b(\theta_H),\mu)\bigr)
\end{equation}
along with the other displayed inequalities. Favorable conditions must improve safety enough to cover the contract's cost. This restricts each one-type economy; it does not come from unmodeled differences among agents.
\end{assumption}

\begin{remark}[A primitive regularity example]\label{rem:primitive_selection_example}
The following example satisfies the entire selection domain. Let the state lie in $[0,1]$ with a continuously differentiable positive prior density, let signal errors be standard normal, and set
\begin{equation}
\begin{gathered}
b(s)=\frac{1}{10}+\frac{3}{5}s,
\qquad
Y(s)=\frac{1}{5}+\frac{4}{5}s,
\qquad
\ell(y,\mu)=1-y,\\
\tau=\frac{3}{5},
\qquad
\alpha=1,
\qquad
e=\frac{2}{5},
\qquad
k=\frac{1}{50}.
\end{gathered}
\end{equation}
With these choices, direct integration gives
\begin{equation}
R_{open}(s)=-\frac{9}{20}+\frac{7}{10}s
\end{equation}
and
\begin{equation}
R_{SAC}(s)=
\begin{cases}
-\dfrac{1}{50}, & 0\le s\le\dfrac{1}{2},\\[6pt]
\dfrac{4(2s-1)^2}{5(2s+1)}-\dfrac{1}{50},
& \dfrac{1}{2}<s<\dfrac{5}{6},\\[8pt]
-\dfrac{47}{100}+\dfrac{7}{10}s,
& \dfrac{5}{6}\le s\le1.
\end{cases}
\end{equation}
When $s\le1/2$, even the largest available coalition falls short of the threshold, so the contract fails at every rank. For $1/2<s<5/6$, success requires enough marginal agents to join. At $s=5/6$ and above, the vanguard meets the threshold by itself, so the contract succeeds at every rank. The average open-expression payoff crosses zero once, at $9/14$. The contract payoff crosses zero once, at $(81+\sqrt{321})/160$, strictly between $1/2$ and $5/6$ and below $9/14$.

The example also supplies the extreme signals that start the selection argument. At $s=1/10$, open participation pays $2/5-\ell(Y(1/10),\mu)=-8/25<0$ even if every marginal agent joins. The contract cannot reach the threshold and pays $-1/50$. At $s=9/10$, the vanguard is $16/25>3/5$, so participation pays at least $1/25$ under open expression and $1/50$ under the contract even if no other marginal agent joins. The inequalities remain strict near both states. Because participation payoffs rise with the state and higher signals point to higher states, sufficiently low signals make staying out best whatever others do, while sufficiently high signals make joining best. The example therefore supplies both dominance regions and a unique zero for each average payoff. The general result remains conditional on Assumption~\ref{ass:operative_domain}.
\end{remark}

The vanguard matters because it makes joining worthwhile without help from other marginal agents in favorable states. If $b(s)<\tau$ at every state, one infinitesimal signer cannot trigger publication but still pays $k$. No signal then makes signing profitable regardless of others' choices. The contract's selection argument therefore starts with an existing vanguard, not an empty coalition.

Starting from the dominance regions, agents repeatedly rule out an action that loses against every behavior others could still choose. Appendix~\ref{app:proofs} defines the process at every agent and signal, allowing strategies that differ across agents or respond nonmonotonically to signals. As noise vanishes, the signal interval where surviving strategies can disagree shrinks to a point. This is \emph{essential uniqueness}: behavior is pinned down away from the boundary, even though the complete-information game has several equilibria.

\begin{lemma}[Noise-limit selection]\label{lem:global_selection}
Under Assumption~\ref{ass:operative_domain}, fix either arm $r\in\{open,SAC\}$. As $\sigma\downarrow0$, the lower and upper signal boundaries of the strategies surviving iterated deletion of interim strictly dominated actions both converge to $\theta_r^*$, the unique zero of $R_r$. Rationalizable strategies can disagree only between these boundaries. Equivalently, for every $\delta>0$, sufficiently small $\sigma$ forces participation at every $x>\theta_r^*+\delta$ and nonparticipation at every $x<\theta_r^*-\delta$. The interval of possible disagreement therefore shrinks to a point. This is \emph{essential uniqueness}.

At the fixed evaluation state $\theta$, define an arm's \emph{rank-indifference value} $\bar e_{arm}(\theta)$ by
\begin{equation}
\int_0^1 u_{arm}(\theta,w)\,dw=0 .
\end{equation}
This value is the private benefit from public expression that makes the marginal type break even after averaging over the ranks generated by the global game. Interpret it by comparing populations that differ only in their common expressive benefit $e$, not by placing several types in one population. When the boundary economy satisfies Assumption~\ref{ass:operative_domain}, the rank-indifference value is the arm's selected expressive-benefit cutoff. The marginal type participates in otherwise identical one-type economies above the cutoff and stays out below it. Appendix~\ref{app:proofs} proves the result for all strategies surviving iterated deletion.
\end{lemma}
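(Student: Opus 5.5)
The proof follows the standard Frankel--Morris--Pauzner argument, run separately in each arm $r\in\{open,SAC\}$. I will use monotone (cutoff) iterations and a Laplacian-rank limit.

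\textbf{Step 1: monotonicity and cutoff iterations.} Both $u_{open}(s,w)$ and $u_{SAC}(s,w)$ are weakly increasing in $w$, because $\ell$ is strictly decreasing in $y$ and $y$ is increasing in $w$; in the contract arm this also needs the indicator $\mathbf 1\{y\ge\tau\}$ and $\tau$-protection, which makes the bracket nonnegative once the indicator switches on. They are also weakly increasing in $s$, since $b$ and $Y$ are weakly increasing. Strategic complementarity then lets me sandwich every strategy surviving iterated deletion between two cutoff profiles. These are obtained by iterating best responses to the cutoff strategies $\mathbf 1\{x\ge x_L\}$ and $\mathbf 1\{x\ge x_H\}$, starting from the dominance footholds $x_L<x_H$ supplied by Assumption~\ref{ass:operative_domain}. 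Let $U_r(x;c)$ denote the expected payoff of joining at signal $x$ when others use cutoff $c$. By the MLRP of the signals, $U_r(x;c)$ is increasing in $x$ and decreasing in $c$, so best responses to cutoff strategies are themselves cutoffs. The lower iteration increases monotonically, the upper one decreases, both stay in $\mathcal J$, and they converge to the smallest and largest cutoff equilibria $\underline c_\sigma\le\overline c_\sigma$. Any surviving strategy joins above $\overline c_\sigma$ and abstains below $\underline c_\sigma$. This works agent by agent and signal by signal, so heterogeneous and nonmonotone strategies are covered.

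\textbf{Step 2: the rank (Laplacian) belief.} Each fixed point satisfies $U_r(c;c)=0$. The key calculation is the distribution of the participation share $w_c(\theta)=1-F_\eta((c-\theta)/\sigma)$ from the viewpoint of the agent whose signal is exactly $c$. Under the diffuse kernel, $\theta=c-\sigma\eta$. Hence $w=1-F_\eta(\eta')$ with $\eta'=(c-\theta)/\sigma=\eta$, which makes $w=1-F_\eta(\eta)$ exactly uniform on $[0,1]$. The state itself concentrates at $c$ within $O(\sigma)$. Under a proper prior with density $g$ bounded below on $\mathcal J$ and $C^1$, the posterior kernel differs from the flat one by a factor $g(c-\sigma\eta)/g(c)\to1$ uniformly on compacts. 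Tails are controlled by the boundedness of $u_r$ and by tightness of $F_\eta$. Therefore
\[
U_r(c;c)=\int_0^1 u_r(c,w)\,dw+o(1)=R_r(c)+o(1)
\]
uniformly for $c\in\mathcal J$.

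\textbf{Step 3: conclusion.} $R_r$ is continuous and strictly single crossing, with unique zero $\theta_r^*$. So for any $\delta>0$ it is bounded away from zero on $\mathcal J\setminus(\theta_r^*-\delta,\theta_r^*+\delta)$. By Step 2, every fixed point, in particular $\underline c_\sigma$ and $\overline c_\sigma$, lies in that window for small $\sigma$, which is essential uniqueness. For the cutoff interpretation of $\bar e_{arm}(\theta)$, note that $R_r$ is strictly increasing in $e$ in both arms. This is clear in the open arm. In the contract arm it requires $\Pr(\mathbf 1\{y\ge\tau\}=1)>0$ at $\theta$, which holds because $b(\theta)<\tau<Y(\theta)$. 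So the sign of $R_r(\theta)$ in the economy with benefit $e$ matches the sign of $e-\bar e_{arm}(\theta)$. Since $\theta_r^*$ moves monotonically with $e$, economies with $e$ above the cutoff have $\theta_r^*<\theta$ and participate at $\theta$ in the limit, while economies with $e$ below it abstain.

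\textbf{Main obstacle.} The hard step is the uniformity in Step 2 for the contract arm. There $u_{SAC}$ jumps at the rank $w^\dagger(s)$ where $y=\tau$, and that rank moves with $s$. Locally uniform convergence is therefore not automatic. I would handle it by noting that the discontinuity set $\{(s,w): y(w;s)=\tau\}$ is a $C^1$ curve, because $b$ and $Y$ are $C^1$ and $Y-b>0$. It has measure zero under the limiting product measure (a point mass in $s$ times the uniform distribution in $w$), and the error is at most the probability of an $O(\sigma)$ neighbourhood of that curve, which tends to zero. Bounded convergence then gives the $o(1)$ bound uniformly on $\mathcal J$.
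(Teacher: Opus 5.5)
Your proposal is correct and follows essentially the same route as the paper. It uses cutoff iterations from the dominance regions that bound every survivor, asymmetric or nonmonotone, between two cutoffs. The boundary agent is indifferent at the limiting cutoffs, and the probability-integral transform makes her belief about $w$ uniform. Dominated convergence and strict single crossing of $R_r$ then pin both limits at $\theta_r^*$.

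You phrase the last step as uniform convergence on $\mathcal J$, while the paper takes convergent subsequences in the compact $\mathcal J$. You also give an explicit argument that the rank-indifference value is the $e$-cutoff, which the paper leaves as interpretation.

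One step you assert without proof: $U_r(x;c)$ must be jointly continuous at fixed $\sigma$, so that the limits of the iterations satisfy $U_r(c;c)=0$. The paper proves this with the same measure-zero-jump argument you give for the noise limit. The publication threshold is hit at most at one error realization, so dominated convergence applies.
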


The lemma makes the comparison tractable. In the small-noise limit, an agent at the participation boundary weights every fraction of other marginal participants equally. We can therefore compare the institutions by averaging their payoffs along the same recruitment path. The next theorem performs this calculation and then checks that the publication threshold protects the people it would recruit.

\begin{theorem}[The selection gap]\label{thm:selection_gap}
At an interior evaluation state, the vanguard is below the threshold and full participation is above it. Let
\begin{equation}
w_\tau=\frac{\tau-b(\theta)}{Y(\theta)-b(\theta)}\in(0,1)
\end{equation}
be the fraction of the marginal pool needed to move the coalition from $b(\theta)$ to $\tau$. The contract fails below $w_\tau$ and publishes at or above it. The two rank-indifference values are
\begin{equation}
\begin{gathered}
\bar e_{open}(\theta)=\alpha\int_0^1 \ell\bigl(y(w;\theta),\mu\bigr)dw,\\
\bar e_{SAC}(\theta)=\alpha\,\frac{\int_{w_\tau}^1 \ell\bigl(y(w;\theta),\mu\bigr)dw}{1-w_\tau}+\frac{k}{1-w_\tau},
\end{gathered}
\end{equation}
Open expression averages exposure over the entire path because every participant is named at every rank. The contract averages exposure only over successful ranks: below the threshold, the signer is neither named nor receives the value of public expression. The formula adds $k/(1-w_\tau)$ because the signer pays $k$ at every rank but receives expressive value only on the successful share $1-w_\tau$. Dividing by that share gives the expressive value needed on success to cover the always-paid cost.

Their \emph{accounting gap} is
\begin{equation}
\begin{aligned}
G^{acct}(\theta)
\equiv\bar e_{open}(\theta)-\bar e_{SAC}(\theta)
={}&
\alpha\,w_\tau\operatorname{avg}_{[0,w_\tau]}\ell\bigl(y(w;\theta),\mu\bigr)\\
&-\alpha\,w_\tau\operatorname{avg}_{[w_\tau,1]}\ell\bigl(y(w;\theta),\mu\bigr)\\
&-\frac{k}{1-w_\tau}.
\end{aligned}
\end{equation}
The first two lines compare exposure below and above the threshold. Exposure falls as the coalition grows, so the below-threshold average is larger. Their weighted difference is the \emph{exposure-tail advantage}: the exposure between the vanguard and threshold that failure privacy removes. The last line subtracts the friction haircut, stated in the same expressive-benefit units. The accounting gap records the payoff effect of withholding signer identities after failure before imposing the selection conditions. When $G^{acct}>0$, avoided exposure is worth more than the contract's cost.

Accounting alone does not determine selection. The monotone-selection theorem requires $e\ge\alpha\ell(\tau,\mu)$, so publication at the threshold does not make joining less attractive. A type who loses at the threshold may still welcome publication in a larger coalition, but the theorem does not cover that case. Combining threshold protection with rank indifference gives the contract's expressive-benefit requirement within this domain:
\begin{equation}
e^{sel}_{SAC}(\theta)
=\max\{\bar e_{SAC}(\theta),\alpha\ell(\tau,\mu)\}.
\end{equation}
Whenever the separate one-type economies invoked in the comparison satisfy the remaining clauses of Assumption~\ref{ass:operative_domain}, the selected participation-cutoff gap is
\begin{equation}
G^{sel}(\theta)
=\bar e_{open}(\theta)-e^{sel}_{SAC}(\theta)
=\min\{G^{acct}(\theta),\bar e_{open}(\theta)-\alpha\ell(\tau,\mu)\}.
\end{equation}
The minimum reflects two tests: the rank accounting must favor the contract, and the threshold must protect the people it recruits. The selected gap is positive only when both the accounting gap and recruitment coverage are positive. If $\bar e_{SAC}(\theta)\ge\alpha\ell(\tau,\mu)$, protection does not bind and $G^{sel}=G^{acct}$. A positive $G^{sel}$ is the width, in units of $e$, of the interval covered by the theorem in which the marginal type joins the failure-private contract but not open expression.
\end{theorem}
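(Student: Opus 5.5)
The proof has two layers: an exact computation of the two rank integrals at the evaluation state, and a selection layer that invokes Lemma~\ref{lem:global_selection} to turn those integrals into participation cutoffs in $e$.

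\textbf{The rank integrals.} I begin with the rank integrals $R_r(\theta)=\int_0^1 u_r(\theta,w)\,dw$ evaluated at the fixed interior state, where $b(\theta)<\tau<Y(\theta)$.

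Because $y(w;\theta)$ is affine and strictly increasing in $w$, the event $y(w;\theta)\ge\tau$ is exactly $w\ge w_\tau$. Interiority places $w_\tau$ in $(0,1)$.

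For open expression, $R_{open}(\theta)=e-\alpha\int_0^1\ell(y(w;\theta),\mu)\,dw$. This is affine in $e$ with slope one, so its zero is the displayed $\bar e_{open}(\theta)$.

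For the contract, the indicator truncates the integral to $[w_\tau,1]$:
\[
R_{SAC}(\theta)=(1-w_\tau)e-\alpha\int_{w_\tau}^1\ell(y(w;\theta),\mu)\,dw-k.
\]
This is affine in $e$ with positive slope $1-w_\tau$. Solving for its zero and dividing by $1-w_\tau$ gives $\bar e_{SAC}(\theta)$, including the $k/(1-w_\tau)$ term.

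\textbf{The accounting gap.} I split $\int_0^1=\int_0^{w_\tau}+\int_{w_\tau}^1$ in $\bar e_{open}$ and rewrite each piece as (interval length)$\times$(average). Subtracting $\bar e_{SAC}$, the above-threshold terms combine as
\[
(1-w_\tau)\operatorname{avg}_{[w_\tau,1]}\ell-\operatorname{avg}_{[w_\tau,1]}\ell=-w_\tau\operatorname{avg}_{[w_\tau,1]}\ell .
\]
This yields the displayed decomposition of $G^{acct}$. Strict monotonicity of $\ell$ in $y$ then shows that the below-threshold average exceeds the above-threshold one, so the exposure-tail term is positive.

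\textbf{Selection.} By Lemma~\ref{lem:global_selection}, in any one-type economy satisfying Assumption~\ref{ass:operative_domain}, the arm's selected cutoff in $e$ is its rank-indifference value: the type joins iff $R_r>0$, and $R_r$ is increasing in $e$.

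For the contract, the assumption also imposes $\tau$-protection, $e\ge\alpha\ell(\tau,\mu)$. So the set of $e$ covered by the theorem at which the type joins is $e$ above both $\bar e_{SAC}$ and $\alpha\ell(\tau,\mu)$, and the contract cutoff is $e^{sel}_{SAC}=\max\{\bar e_{SAC},\alpha\ell(\tau,\mu)\}$.

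The gap then follows from the identity $a-\max\{c,d\}=\min\{a-c,a-d\}$. This gives $G^{sel}=\min\{G^{acct},\bar e_{open}-\alpha\ell(\tau,\mu)\}$. It also gives the non-binding case: if $\bar e_{SAC}\ge\alpha\ell(\tau,\mu)$, the max is $\bar e_{SAC}$ and $G^{sel}=G^{acct}$.

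The interpretive claim follows directly. Values of $e$ in $[e^{sel}_{SAC},\bar e_{open})$ select joining under the contract and abstention under open expression, and this interval has width $G^{sel}$ when $G^{sel}>0$.

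\textbf{Main obstacle.} The difficult step is justifying that comparing cutoffs across separate one-type economies indexed by $e$ is legitimate. Lemma~\ref{lem:global_selection} gives a state cutoff $\theta_r^*$ for a fixed $e$, not a cutoff in $e$ at a fixed $\theta$. I would translate between them as follows.

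First, because $R_r(s)$ is increasing in $e$, raising $e$ lowers $\theta_r^*(e)$.

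Second, $\bar e_r(\theta)$ is precisely the $e$ at which $\theta_r^*(e)=\theta$. So at state $\theta$, in the noise limit, the type participates iff $e>\bar e_r(\theta)$.

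Third, this holds only for values of $e$ at which the remaining clauses of the assumption hold. For the contract this requires $e\ge\alpha\ell(\tau,\mu)$, which is exactly why the max appears.

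I would state this restriction explicitly as the theorem's hypothesis rather than derive the domain clauses, since the statement itself conditions on them.
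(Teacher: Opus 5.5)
Your proposal is correct and follows essentially the same route as the paper's proof. You integrate both payoffs over ranks at the interior state, solve the resulting affine equations in $e$, split the open-expression integral at $w_\tau$ into length-times-average pieces to get $G^{acct}$, and obtain $G^{sel}$ from $e^{sel}_{SAC}=\max\{\bar e_{SAC},\alpha\ell(\tau,\mu)\}$ via $x-\max\{y,z\}=\min\{x-y,x-z\}$.

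Your explicit translation from the state cutoff $\theta_r^*$ to an expressive-benefit cutoff is something the paper only asserts inside Lemma~\ref{lem:global_selection}, so it is a useful addition. The one small correction concerns the left endpoint: when $\bar e_{SAC}$ attains the maximum, $e=e^{sel}_{SAC}$ is an exact rank tie. The interval should therefore be open there, and its width is still $G^{sel}$.
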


The size of the advantage also depends on which costs the two institutions share. The baseline treats verification and escrow costs as specific to the contract. If public expression requires the same costly action, then
\begin{equation}
\bar e_{open}^{sym}(\theta)=\bar e_{open}(\theta)+k,
\end{equation}
while $\bar e_{SAC}^{sym}(\theta)=\bar e_{SAC}(\theta)$. The friction haircut becomes $k w_\tau/(1-w_\tau)$ rather than $k/(1-w_\tau)$ because open expression now also costs $k$. The protection cap applies under either normalization. Use the baseline when verification and custody create the incremental cost, and the symmetric formula when both forms of expression require the same costly action.

Figure~\ref{fig:selection_gap} illustrates the baseline decomposition. The shaded area is the exposure difference between the below-threshold ranks and the above-threshold average; signing friction is then subtracted separately.

\begin{figure}[t]
\centering
\includegraphics[width=0.72\textwidth]{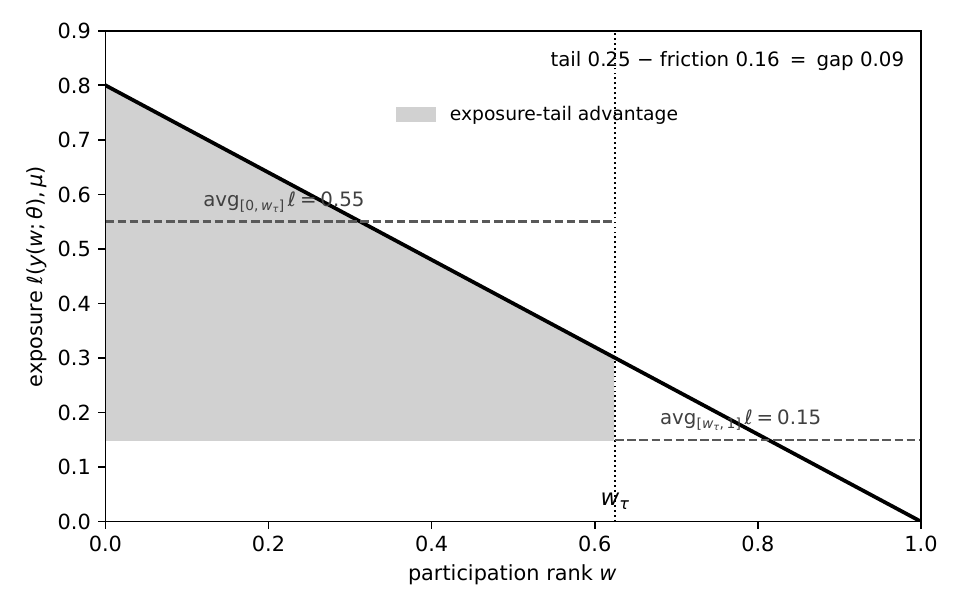}
\caption{Baseline exposure-tail accounting. Exposure $\ell(y(w;\theta),\mu)$ follows the recruitment path for $\ell(y)=1-y$, $b(\theta)=0.2$, $Y(\theta)=1$, $\tau=0.7$, $\alpha=1$, and $k=0.06$, so $w_\tau=0.625$. The dashed lines mark the below- and above-threshold rank averages, $0.55$ and $0.15$. The shaded exposure-tail advantage is the scaled difference $\alpha w_\tau(0.55-0.15)=0.25$. The friction haircut is $k/(1-w_\tau)=0.16$, leaving $G^{acct}=0.09$. Because $\bar e_{SAC}=0.31\ge\alpha\ell(\tau,\mu)=0.30$, threshold protection does not bind and $G^{sel}=G^{acct}$.}
\label{fig:selection_gap}
\end{figure}

The contract is most useful for people who want to speak but need a larger group to make doing so worthwhile. The next corollary identifies this region and the cases where it disappears. Each population still has one marginal type; comparing types means comparing populations, not mixing different preferences within one population.

\begin{corollary}[Conditional comparison across one-type economies]\label{cor:gap_properties}
Fix the common structural objects and consider separate one-type economies indexed by $(e,\alpha)$. Statements about selected behavior require the relevant economy to satisfy Assumption~\ref{ass:operative_domain}, including protection, dominance, vanguard clearance, and friction. Parts 2 and 3 cover two boundaries: coalition size does not lower exposure, and the vanguard already reaches the threshold.
\begin{enumerate}
\item \emph{Intermediate region.} In any such economy whose marginal type satisfies
\begin{equation}
\max\{\bar e_{SAC}(\theta),\ \alpha\ell(\tau,\mu)\}\ <\ e\ <\ \bar e_{open}(\theta)
\end{equation}
the marginal type joins under the selected contract but not selected open expression. Across one-type economies, the strict interval has width $G^{sel}(\theta)>0$. Its weak recruitment-coverage boundary is
\begin{equation}
\bar e_{open}(\theta)\ \ge\ \alpha\,\ell(\tau,\mu)
\qquad\text{(equivalently } \textstyle\int_0^1 \ell(y(w;\theta),\mu)\,dw \ \ge\ \ell(\tau,\mu)\text{)}
\end{equation}
with strict inequality for a nonempty interval. The monotone-selection argument requires $e\ge\alpha\ell(\tau,\mu)$; otherwise publication makes the type's payoff jump downward.
\item \emph{Boundary without safety in numbers.} If $\ell$ is constant in coalition mass on the relevant range, failed and successful ranks have the same exposure. The accounting gap is then $-k/(1-w_\tau)<0$: the contract adds signing friction without removing exposure.
\item \emph{Vanguard-at-threshold null.} If $b(\theta)\ge\tau$ at the evaluation state --- the limit $w_\tau\downarrow0$ with $w_\tau=\max\{0,(\tau-b)/(Y-b)\}$ --- the tail vanishes and the accounting gap is exactly $-k$. When the vanguard already publishes by itself, the contract protects no additional below-threshold ranks and adds only friction. Its value comes from the distance between the vanguard and threshold.
\item \emph{Fixed-index scope.} At a common fixed $\mu$, the exposure-tail advantage is positive exactly when average exposure below the threshold exceeds average exposure above it. The accounting gap is positive exactly when this advantage exceeds $k/(1-w_\tau)$. The selected gap also requires positive recruitment coverage. Information can change $\mu$ and hence both cutoffs and gaps. Holding $\mu$ fixed isolates conditional disclosure: failure privacy prevents below-threshold exposure. Supplemental Appendix~\ref{app:privacy_exchangeability} discusses the separate inference channel.
\end{enumerate}
\end{corollary}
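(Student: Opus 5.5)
The corollary follows from Theorem~\ref{thm:selection_gap} and Lemma~\ref{lem:global_selection}, taking one part at a time; the formulas for $\bar e_{open}$, $\bar e_{SAC}$, $G^{acct}$ and $G^{sel}$ are used as given. For part 1, fix an economy $(e,\alpha)$ that satisfies Assumption~\ref{ass:operative_domain}. By Lemma~\ref{lem:global_selection}, the rank-indifference value is each arm's selected expressive-benefit cutoff. For open expression, $R_{open}$ is affine in $e$ with slope one, so it is strictly increasing in $e$. Hence $e<\bar e_{open}(\theta)$ gives $R_{open}(\theta)<0$. Single crossing then places $\theta^*_{open}$ above $\theta$, and the marginal type stays out at the evaluation state.

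For the contract, $R_{SAC}$ is affine in $e$ with slope $1-w_\tau>0$, so $e>\bar e_{SAC}(\theta)$ gives $R_{SAC}(\theta)>0$ and places $\theta^*_{SAC}$ below $\theta$. The condition $e>\alpha\ell(\tau,\mu)$ ensures the $\tau$-protection clause, which the monotone-selection argument requires. The set of such $e$ is the open interval $(e^{sel}_{SAC},\bar e_{open})$. Its width is $\bar e_{open}-\max\{\bar e_{SAC},\alpha\ell(\tau,\mu)\}=G^{sel}$. The interval is nonempty only if $\bar e_{open}>\alpha\ell(\tau,\mu)$. Dividing this by $\alpha>0$ gives the stated integral form of recruitment coverage, and the weak inequality marks its boundary.

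For part 2, substitute a constant $\ell\equiv\bar\ell$ into the decomposition of $G^{acct}$. Both averages then equal $\bar\ell$, so the exposure-tail term cancels and $G^{acct}=-k/(1-w_\tau)<0$. For part 3, note that when $b(\theta)\ge\tau$ the contract publishes at every rank, so $u_{SAC}=u_{open}-k$ pointwise. Then $\bar e_{SAC}=\bar e_{open}+k$, so $G^{acct}=-k$. This matches setting $w_\tau=0$ in the formula, where the tail term carries the factor $w_\tau$ and the haircut becomes $k$. Part 4 restates the decomposition. The factor $\alpha w_\tau$ is positive, so the tail advantage is positive exactly when $\operatorname{avg}_{[0,w_\tau]}\ell>\operatorname{avg}_{[w_\tau,1]}\ell$, and $G^{acct}>0$ exactly when that advantage exceeds $k/(1-w_\tau)$. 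The fixed-$\mu$ statement holds because every formula is evaluated at a common $\mu$.

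I expect the main obstacle to be the interpretive step in part 1. Selected behavior is defined at the noise-limit cutoff in the state, whereas the comparison is made in $e$ at a fixed evaluation state. The plan is to argue that the sign of $R_r(\theta)$ locates $\theta$ relative to the unique zero $\theta^*_r$, using the single crossing in Assumption~\ref{ass:operative_domain}. Lemma~\ref{lem:global_selection} then gives participation at signals near $\theta$ for small $\sigma$ exactly when $R_r(\theta)>0$. This requires every economy in the interval to satisfy the assumption, and the corollary imposes that explicitly.
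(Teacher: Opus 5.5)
Your proposal is correct and follows the paper's own proof part by part, using Lemma~\ref{lem:global_selection} and the formulas of Theorem~\ref{thm:selection_gap}. You spell out steps the paper leaves terse: you locate $\theta$ relative to $\theta_r^*$ through $R_{open}(\theta)=e-\bar e_{open}(\theta)$ and $R_{SAC}(\theta)=(1-w_\tau)\bigl(e-\bar e_{SAC}(\theta)\bigr)$ plus single crossing, and you derive part 3 from the pointwise identity $u_{SAC}=u_{open}-k$ as well as the limit $w_\tau\downarrow0$. The facts you use without comment are available: $\alpha>0$ is forced within Assumption~\ref{ass:operative_domain} (clauses 5 and 6 give $e>k>0$, while clause 2 gives $e<\alpha\ell(Y(\theta_L),\mu)$), and $k>0$ supplies the strict sign in part 2.
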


\begin{corollary}[Common-rank accounting robustness]\label{cor:common_rank}
At an interior evaluation state, let $W$ have any distribution $H$ on $[0,1]$, applied identically to both institutions. Some parts of the conceptual recruitment path can receive more weight than others, but the two arms must use the same weights. Define
\begin{equation}
L(W)=\ell\bigl(y(W;\theta),\mu\bigr),
\qquad
p_H=\Prob_H(W\ge w_\tau)\in(0,1).
\end{equation}
The corresponding $H$-weighted accounting values are
\begin{equation}
\bar e_{open}^H=\alpha\E_H[L(W)],
\qquad
\bar e_{SAC}^H
=\alpha\E_H[L(W)\mid W\ge w_\tau]+\frac{k}{p_H},
\end{equation}
and their difference is
\begin{equation}
\begin{aligned}
\bar e_{open}^H-\bar e_{SAC}^H
={}&\alpha(1-p_H)
\left(
\E_H[L(W)\mid W<w_\tau]
-\E_H[L(W)\mid W\ge w_\tau]
\right)\\
&-\frac{k}{p_H}.
\end{aligned}
\end{equation}
Under strict safety in numbers, average exposure is higher below the threshold. Every common rank weighting therefore retains the same avoided-tail-minus-friction accounting structure.
\end{corollary}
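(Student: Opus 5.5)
The plan is a direct computation: I will write both $H$-weighted rank values in terms of $L(W)$, subtract, and then split $\E_H[L(W)]$ over the event $\{W\ge w_\tau\}$ and its complement. First I will define the $H$-weighted indifference values by analogy with Theorem~\ref{thm:selection_gap}. Each one is the value of $e$ that solves $\E_H[u_{arm}(\theta,W)]=0$. For open expression, $u_{open}(\theta,W)=e-\alpha L(W)$, so $\bar e_{open}^H=\alpha\E_H[L(W)]$. For the contract, monotonicity of $y(\cdot;\theta)$ gives $\mathbf 1\{y(W;\theta)\ge\tau\}=\mathbf 1\{W\ge w_\tau\}$. Then $\E_H[u_{SAC}]=p_H e-\alpha\E_H[L(W)\mathbf 1\{W\ge w_\tau\}]-k$. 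Dividing by $p_H>0$ gives $\bar e_{SAC}^H=\alpha\E_H[L\mid W\ge w_\tau]+k/p_H$. The conditional expectations are well defined because $p_H\in(0,1)$ and $\ell$ is bounded and continuous, so $L$ is bounded and measurable.

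Next I will decompose $\E_H[L]=(1-p_H)\E_H[L\mid W<w_\tau]+p_H\E_H[L\mid W\ge w_\tau]$ by the law of total expectation. Subtracting $\alpha\E_H[L\mid W\ge w_\tau]$ leaves $\alpha(1-p_H)\bigl(\E_H[L\mid W<w_\tau]-\E_H[L\mid W\ge w_\tau]\bigr)$. Subtracting $k/p_H$ then gives the displayed identity. As a consistency check, when $H$ is uniform we get $p_H=1-w_\tau$, and the formula reduces to $G^{acct}$ from Theorem~\ref{thm:selection_gap}.

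For the sign claim, I will use that $y(w;\theta)$ is strictly increasing in $w$, because $Y(\theta)>b(\theta)$, and that $\ell$ is strictly decreasing in $y$. Hence $L(w)$ is strictly decreasing. On $\{W<w_\tau\}$ we have $L(W)>L(w_\tau)$, and on $\{W\ge w_\tau\}$ we have $L(W)\le L(w_\tau)$. Both events have positive $H$-probability, so the conditional means satisfy $\E_H[L\mid W<w_\tau]>L(w_\tau)\ge\E_H[L\mid W\ge w_\tau]$. The strictness comes from the first inequality holding pointwise and strictly on a set of positive measure.

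The main obstacle is interpretive rather than technical. The statement is about accounting, so I will not claim that an arbitrary $H$ arises from a global-game selection argument; Lemma~\ref{lem:global_selection} justifies only the uniform weighting. I will state the result as the identity plus the sign of the tail term, and add one remark. That remark notes that the protection cap $\alpha\ell(\tau,\mu)$ from Theorem~\ref{thm:selection_gap} would be needed on top of this accounting whenever $H$ is used for selection, but this is outside the corollary's claim.
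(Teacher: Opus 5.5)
Your proposal is correct and follows essentially the same route as the paper: solve the $H$-weighted rank-indifference conditions, decompose $\E_H[L(W)]$ by the law of total expectation, and sign the tail term using strict monotonicity of $L$ together with $p_H\in(0,1)$. Your pivot at $L(w_\tau)$ spells out the strictness step, which the paper states as the observation that every below-threshold rank carries higher exposure than every successful rank.
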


Corollary~\ref{cor:gap_properties} identifies when failure privacy helps and when it merely adds a cost. If size offers no protection or the vanguard already reaches the threshold, the contract removes no exposure disadvantage to offset its signing cost. Corollary~\ref{cor:common_rank} asks how much of the calculation depends on giving every rank equal weight. Uniform $H$ gives $p_H=1-w_\tau$ and exactly reproduces $G^{acct}$. Other weights retain the same benefit-minus-cost structure when both institutions use them. Those alternative weights need not describe equilibrium beliefs; Lemma~\ref{lem:global_selection} supplies the reason to use uniform ranks for the selection result.

\subsection{Designing the Contract: Thresholds and Failure Privacy}\label{subsec:design}

The architect must choose when to publish. A safe rule first needs a minimum size at which a signer is willing to be named. It should then preserve the incentive to join as more people participate. Fix the marginal type $(e,\alpha)$ and audience-belief index $\mu$, with $\ell$ strictly decreasing in coalition mass, and define the type's \emph{protection requirement}
\begin{equation}
\rho(e,\alpha,\mu)=\inf\{y\in[0,1]:\ e\ge\alpha\ell(y,\mu)\},
\end{equation}
the smallest public coalition in which the type is willing to be named. If no size suffices, set $\rho=+\infty$. This is the homogeneous counterpart of the individual protection requirements $\rho_i$ in the companion paper.

A safe rule never names a signer below the protection requirement. Safety is evaluated after final coalition size is known. The companion paper extends this standard to composition-dependent protection through regret-free safety \citep{cashmanSafeCollectiveExpression2026}. A \emph{pointwise maximal} safe rule releases the list at every safe size rather than withholding where safety permits release.

\begin{proposition}[Maximal safe release]\label{prop:safe_disclosure}
Restrict attention to deterministic size-only publication indicators $\chi:[0,1]\to\{0,1\}$. The induced release map is
\begin{equation}
D_\chi(V)=
\begin{cases}
V, & \chi(m(V))=1,\\
\varnothing, & \chi(m(V))=0.
\end{cases}
\end{equation}
Thus the mechanism jointly publishes the complete signer set or no names. The indicator is \emph{ex post safe} if $\chi(y)=1\Rightarrow y\ge\rho$. Within this class, the unique pointwise maximal safe indicator is
\begin{equation}
\chi^*(y)=\mathbf 1\{y\ge\rho\}.
\end{equation}
Every safe size-only rule therefore withholds the full list below $\rho$, and the maximal one releases it at every safe mass.
\end{proposition}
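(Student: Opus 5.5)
The argument is an order-theoretic check on indicator functions. The proposition has three parts: $\chi^*$ is ex post safe, it pointwise dominates every other safe indicator, and it is the only safe indicator with that property. I would establish them in that order, working directly from the definition of $\rho(e,\alpha,\mu)$ and using no earlier result beyond the fact that $\ell$ is strictly decreasing and continuous in $y$. That fact matters only for reading $\rho$ as a genuine threshold. I would start by recording that, because $\ell$ is decreasing, the set $\{y\in[0,1]: e\ge\alpha\ell(y,\mu)\}$ is an up-set (an interval of the form $[\rho,1]$ or $(\rho,1]$). Because $\ell$ is continuous, the infimum is attained whenever the set is nonempty, so the set equals $[\rho,1]$. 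If the set is empty, $\rho=+\infty$ and both conventions read $\chi^*\equiv0$.

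\emph{Safety.} If $\chi^*(y)=1$, then $y\ge\rho$ by construction, so the implication $\chi^*(y)=1\Rightarrow y\ge\rho$ holds trivially. The induced map $D_{\chi^*}$ then never names a signer set whose mass lies below $\rho$. By the previous paragraph, every released coalition satisfies $e\ge\alpha\ell(m(V),\mu)$, so publication does protect the type and not merely satisfy the formal inequality.

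\emph{Pointwise maximality.} Take any ex post safe $\chi$ and any $y$. If $\chi(y)=1$, safety gives $y\ge\rho$, hence $\chi^*(y)=1$. Therefore $\chi\le\chi^*$ pointwise, and every safe rule withholds below $\rho$. This is the ``every safe size-only rule withholds the full list below $\rho$'' clause.

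\emph{Uniqueness.} Suppose $\hat\chi$ is another pointwise maximal safe indicator. By the previous step, $\hat\chi\le\chi^*$. Since $\chi^*$ is itself safe, maximality of $\hat\chi$ rules out $\hat\chi\le\chi^*$ with strict inequality somewhere. Hence $\hat\chi=\chi^*$. If instead ``pointwise maximal'' is read as ``dominates every safe rule,'' uniqueness follows directly from antisymmetry of the pointwise order.

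The only step needing care is the boundary of the up-set. I must confirm that $y=\rho$ itself is safe in the substantive sense, so that releasing at exactly $\rho$ is consistent with protection. Strict monotonicity plus continuity of $\ell$ delivers this. Without continuity, the formal definition of ex post safety ($y\ge\rho$) would still make the proposition true as stated, so I would note that the substantive reading is a remark rather than part of the proof.
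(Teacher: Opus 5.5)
Your proof is correct and follows essentially the same route as the paper's. Both arguments rest on three steps: safety of $\chi^*$ by construction, forced withholding below $\rho$ (so every safe indicator lies pointwise below $\chi^*$), and uniqueness from the pointwise order, with $\rho=+\infty$ treated as the case $\chi^*\equiv 0$. Your continuity remark that the infimum defining $\rho$ is attained is a harmless extra, since the proposition uses only the formal safety condition $\chi(y)=1\Rightarrow y\ge\rho$.
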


Below $\rho$, safety forces the architect to withhold names. At or above that size, publication is safe, so a rule that releases as often as safety permits must publish. This conclusion applies to the full-list-or-nothing rules in the proposition. Choosing some names but not others, using identities to assess protection, or randomizing release would change the design problem.

\begin{corollary}[Committed thresholds]\label{cor:committed_thresholds}
An architect may commit before anyone signs to withhold release below a chosen mass satisfying finite $\rho\le\tau\le1$. Within the same binary class and subject to that prior commitment, the unique pointwise maximal safe indicator is $\chi_\tau(y)=\mathbf 1\{y\ge\tau\}$. Below $\rho$ withholding is forced by safety; on $[\rho,\tau)$ it is chosen by the stronger commitment.
\end{corollary}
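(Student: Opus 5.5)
The plan is to reduce Corollary~\ref{cor:committed_thresholds} to Proposition~\ref{prop:safe_disclosure} by treating the commitment as a tighter safety constraint. Fix a finite $\rho$ and a committed mass $\tau$ with $\rho\le\tau\le1$. The admissible class is the set of deterministic size-only indicators $\chi:[0,1]\to\{0,1\}$ that satisfy two conditions. The first is ex post safety, $\chi(y)=1\Rightarrow y\ge\rho$. The second is the prior commitment, $\chi(y)=1\Rightarrow y\ge\tau$. Because $\rho\le\tau$, the commitment implies safety, so the joint constraint is simply $\chi(y)=1\Rightarrow y\ge\tau$. This is the same constraint as in Proposition~\ref{prop:safe_disclosure}, with $\rho$ replaced by $\tau$. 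The release map $D_\chi$ is still full-list-or-nothing, so nothing else changes.

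The next step is to show that $\chi_\tau(y)=\mathbf 1\{y\ge\tau\}$ is admissible and pointwise maximal. Admissibility is immediate: $\chi_\tau(y)=1$ only when $y\ge\tau\ge\rho$. For pointwise dominance, take any admissible $\chi$. If $y<\tau$, then $\chi(y)=0=\chi_\tau(y)$. If $y\ge\tau$, then $\chi(y)\le1=\chi_\tau(y)$. Hence $\chi\le\chi_\tau$ pointwise.

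Uniqueness follows by a standard order argument. Suppose $\chi'$ is another pointwise maximal admissible indicator. Dominance gives $\chi'\le\chi_\tau$. Maximality of $\chi'$ then rules out a strict inequality anywhere, since otherwise $\chi_\tau$ would be an admissible indicator strictly above $\chi'$ at some mass. So $\chi'=\chi_\tau$. The argument can also be run by replaying the proof of Proposition~\ref{prop:safe_disclosure} with $\tau$ in place of $\rho$.

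The final step is the interpretive split of the withholding region $[0,\tau)$:
\begin{enumerate}
\item On $[0,\rho)$, withholding is forced by safety alone. This is the content of Proposition~\ref{prop:safe_disclosure}, which holds with or without the commitment.
\item On $[\rho,\tau)$, safety would allow release, since $\chi^*=1$ there. Withholding on this range comes only from the commitment constraint.
\end{enumerate}
When $\tau=\rho$ this interval is empty and the result collapses to the proposition.

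I expect the only delicate point to be boundary bookkeeping. The plan assumes the safety definition uses a weak inequality and that $\rho$ is finite with $\rho\le1$, so that $\chi_\tau$ is a nontrivial indicator on $[0,1]$. That is why finiteness of $\rho$ enters the hypothesis. If $\rho$ is an infimum that is not attained, I would check that $e\ge\alpha\ell(\rho,\mu)$ holds by continuity of $\ell$. This check ensures that releasing at exactly $\tau=\rho$ is genuinely safe.
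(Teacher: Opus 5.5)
Your proposal is correct and follows essentially the same route as the paper. The commitment forces $\chi(y)=0$ below $\tau$, the inequality $\tau\ge\rho$ makes release safe at every $y\ge\tau$, and so $\chi_\tau$ is feasible and pointwise dominates every safe indicator that satisfies the commitment. Your explicit uniqueness step and the check at $\rho$ are only more careful bookkeeping; because $\ell$ is continuous, the protection set is closed, so a finite $\rho$ is always attained.
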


Choosing a threshold above the minimum safe size buys extra protection at a cost, as Theorem~\ref{thm:selection_gap} shows. Raising $\tau$ keeps more candidate coalitions private and lowers $\ell(\tau,\mu)$, the exposure when publication first becomes possible. But it also raises $w_\tau$, the fraction needed to publish, and $k/(1-w_\tau)$, the cost that successful expression must cover. A higher threshold is harder to reach and requires a larger vanguard in favorable states. The architect must weigh stronger protection on success against the need to recruit more people.

\begin{corollary}[Limits of belief-only interventions]\label{cor:belief_only_safety}
Call the marginal type \emph{exposure-dominant} if $e<\alpha\ell(b(\theta),\mu')$ for every audience-belief index $\mu'$ attainable by a belief-only intervention --- one that transmits information but releases no named coalition, so an agent acting alone is publicly associated only with the vanguard mass $b(\theta)$. Then no belief-only intervention makes vanguard-level public expression by that type ex post safe. At any induced index $\mu'$, a coalition-mass intervention that leaves the maintained payoff technology otherwise unchanged can make association safe only by supplying mass at least $\rho(e,\alpha,\mu')$.
\end{corollary}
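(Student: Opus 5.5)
The argument reduces to the definition of $\rho$ and the monotonicity of $\ell$, in two parts matching the two sentences of the conclusion.

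\emph{First claim.} Fix any attainable belief-only index $\mu'$. By definition a belief-only intervention releases no named coalition. A marginal agent who acts alone under it is therefore publicly associated only with the vanguard, so the coalition in which she is named has mass $b(\theta)$. Ex post safety of that association requires $b(\theta)\ge\rho(e,\alpha,\mu')$, which in the proposition's notation means $\chi(b(\theta))=1\Rightarrow b(\theta)\ge\rho$.

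I would show this fails. Since $\ell$ is strictly decreasing in $y$, the set $\{y:e\ge\alpha\ell(y,\mu')\}$ is an up-set in $[0,1]$. Exposure-dominance gives $e<\alpha\ell(b(\theta),\mu')$, so $b(\theta)$ is not in that set. Hence no $y\le b(\theta)$ is in it either, because $y\le b(\theta)$ implies $\alpha\ell(y,\mu')\ge\alpha\ell(b(\theta),\mu')>e$. So every element of the set exceeds $b(\theta)$, which gives $\rho(e,\alpha,\mu')\ge b(\theta)$.

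The step that needs care is the boundary case $\rho=b(\theta)$: the infimum could equal $b(\theta)$ without being attained. That case is still unsafe, because safety is evaluated at the realized mass $b(\theta)$ and the payoff inequality $e\ge\alpha\ell(b(\theta),\mu')$ fails there. I would therefore state safety directly as $e\ge\alpha\ell(y,\mu')$ at the released size, and note that for $y=b(\theta)$ this is exactly what exposure-dominance excludes. If $\rho=+\infty$, the conclusion is immediate. Since $\mu'$ was arbitrary, no belief-only intervention makes vanguard-level expression safe.

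\emph{Second claim.} Fix an induced index $\mu'$. A coalition-mass intervention leaves the payoff technology $\ell(\cdot,\mu')$ unchanged and only changes the mass $y$ with which an agent is associated. Suppose association at mass $y$ is safe, meaning $e\ge\alpha\ell(y,\mu')$. Then $y$ belongs to the defining set of $\rho$, so $y\ge\rho(e,\alpha,\mu')$ by the definition of infimum. This is the same safety notion as Proposition~\ref{prop:safe_disclosure}, evaluated at $\mu'$ in place of $\mu$.

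I would close by remarking that belief-only interventions move $\mu'$, and hence possibly $\rho$. They cannot move the associated mass away from $b(\theta)$, and exposure-dominance says that no attainable $\mu'$ lowers exposure at $b(\theta)$ below $e/\alpha$.
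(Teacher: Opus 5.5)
Your proof is correct and follows essentially the same route as the paper's: a belief-only intervention adds no names, so the isolated expresser faces mass $b(\theta)$; exposure dominance makes $e-\alpha\ell(b(\theta),\mu')$ negative at every attainable index; and any safe mass must lie in the set defining $\rho(e,\alpha,\mu')$. The boundary case $\rho=b(\theta)$ you guard against cannot arise, because continuity of $\ell$ in $y$ makes $\{y: e\ge\alpha\ell(y,\mu')\}$ closed, so $\rho$ is attained and $\rho(e,\alpha,\mu')>b(\theta)$ strictly, as the paper states; the criterion $y\ge\rho$ of Proposition~\ref{prop:safe_disclosure} then coincides with your payoff test without any need to redefine safety.
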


There are two routes to safe expression: a favorable enough audience-belief index or a larger public coalition.

Preserving the incentive to join is a further design requirement, beyond safety. Publishing a small coalition but withholding a larger one can make joining less attractive as participation rises: extra endorsements could switch off a publication that a signer wanted. The next result rules out such reversals. Its richness condition ensures that some recruited type values publication near every size where the rule publishes. The \emph{closure} of the publication set contains those sizes and their boundary points.

\begin{lemma}[Threshold rules preserve complementarity]\label{lem:threshold_shape}
Consider deterministic size-only, failure-private publication indicators $\chi:[0,1]\to\{0,1\}$, with participation payoff $\chi(y)\bigl(e-\alpha\ell(y,\mu)\bigr)-k$ and nonpublication paying $-k$. Say the recruited class is \emph{rich} if for every mass $y$ in the \emph{closure} of the publication set there is a recruited type with $e>\alpha\ell(y,\mu)$. Under richness, the participation payoff is nondecreasing in the aggregate for every recruited type whose success payoff is nonnegative at every published mass if and only if the publication set is an upper set --- once the mechanism publishes at one mass, it also publishes at every larger mass --- and hence $\chi$ is a threshold indicator, up to the boundary convention at the cutoff.
\end{lemma}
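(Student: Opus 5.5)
We prove the two directions separately. Throughout, a type's participation payoff at aggregate mass $y$ is $v(y)=\chi(y)\bigl(e-\alpha\ell(y,\mu)\bigr)-k$. Let $P=\{y:\chi(y)=1\}$ be the publication set. The phrase ``recruited type whose success payoff is nonnegative at every published mass'' means $e\ge\alpha\ell(y,\mu)$ for all $y\in P$.

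\emph{Sufficiency.} Suppose $P$ is an upper set. Then $P$ is $[\tau,1]$ or $(\tau,1]$ for $\tau=\inf P$, or $P$ is empty. Take any admissible type and any $y<y'$, and split into three cases.
\begin{itemize}
\item If neither mass is published, $v(y)=v(y')=-k$.
\item If $y\notin P$ and $y'\in P$, then $v(y')=e-\alpha\ell(y',\mu)-k\ge -k=v(y)$, by the nonnegativity of the success payoff on $P$.
\item If both masses are published, $v(y')-v(y)=\alpha\bigl(\ell(y,\mu)-\ell(y',\mu)\bigr)\ge0$, because $\ell$ is strictly decreasing and $\alpha\ge0$.
\end{itemize}
The case $y\in P$, $y'\notin P$ cannot occur because $P$ is an upper set. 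Hence $v$ is nondecreasing for every such type. This direction does not use richness.

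\emph{Necessity.} Suppose $P$ is not an upper set. Then there exist $y_0<y_1$ with $y_0\in P$ and $y_1\notin P$. We need one admissible recruited type for which $v(y_0)>v(y_1)$. Richness applied at $y_0\in P\subseteq\overline P$ gives a recruited type with $e>\alpha\ell(y_0,\mu)$. For that type, $v(y_0)=e-\alpha\ell(y_0,\mu)-k>-k=v(y_1)$, so monotonicity fails.

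It remains to check that this type is admissible, meaning its success payoff is nonnegative at every published mass. This is the main obstacle. Richness only guarantees $e>\alpha\ell(y,\mu)$ at the chosen point, not on all of $P$, and $\ell$ is decreasing, so the binding constraint sits at $\inf P$.

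The first thing I would try is to apply richness at $\underline y=\inf P$, which lies in $\overline P$. This yields a type with $e>\alpha\ell(\underline y,\mu)$. Since $\ell$ is decreasing, $\ell(y,\mu)\le\ell(\underline y,\mu)$ for every $y\in P$, so this type's success payoff is strictly positive on all of $P$, and the type is admissible. Now fix any published $y_0\in P$ and any larger unpublished $y_1\notin P$. The inequality $v(y_0)>v(y_1)$ follows exactly as above. This is precisely why the statement is phrased with the closure: $\inf P$ may not itself be published.

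\emph{Conclusion.} An upper subset of $[0,1]$ is an interval $[\tau,1]$ or $(\tau,1]$, or is empty. So $\chi$ equals $\mathbf 1\{y\ge\tau\}$ or $\mathbf 1\{y>\tau\}$, which is a threshold indicator up to the boundary convention at $\tau$. Continuity of $\ell$ is not needed for the argument; strict decrease suffices. I would close by remarking that, without richness, a rule could fail to be an upper set without any admissible type detecting it, so the hypothesis is genuinely used only in the necessity direction.
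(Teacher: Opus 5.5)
Your proposal is correct and follows essentially the paper's proof: a direct case check for sufficiency, and for necessity applying richness at $\inf P$ so that the supplied type is admissible on all of $P$, then exhibiting the payoff drop from a published mass to a larger unpublished one. One small difference: your sufficiency argument covers any upper set, including $(\tau,1]$ and the empty set, whereas the paper writes it out only for $\mathbf 1\{y\ge\tau\}$.
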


Thresholds therefore serve two purposes. They protect signers below the chosen mass, and they ensure that additional participation never turns publication off or reverses the incentive to join.

\begin{proposition}[Complete failed-list disclosure]\label{prop:full_disclosure_boundary}
Suppose every signer's identity is released whether or not the threshold is reached, and failed release gives the same expressive value and exposure cost as open expression at the realized coalition mass. This is the opposite of failure privacy: the platform still collects signatures and checks a threshold, but failure no longer protects anyone. Under the baseline normalization in which only the contract incurs signing friction, the contract payoff at every state and rank is
\begin{equation}
u_{SAC}^{FD}(\theta,w)
=e-\alpha\ell\bigl(y(w;\theta),\mu\bigr)-k
=u_{open}(\theta,w)-k.
\end{equation}
Its rank-indifference value is therefore $\bar e_{SAC}^{FD}(\theta)=\bar e_{open}(\theta)+k$. Complete failed-list disclosure eliminates the exposure-tail advantage and leaves the contract worse by exactly $k$ in expressive-benefit units. If the same friction is paid in both arms, their rank-indifference values coincide.
\end{proposition}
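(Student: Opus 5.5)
The argument is direct substitution followed by the same rank integration that produced the formulas in Theorem~\ref{thm:selection_gap}. Under complete failed-list disclosure, the administrator releases every signer whether or not the candidate mass $y(w;\theta)$ reaches $\tau$. The failed-release clause of the statement says that a named signer on a failed list receives expressive value $e$ and bears exposure $\alpha\ell(y(w;\theta),\mu)$ at the realized coalition mass. These are exactly the terms in $u_{open}$. On success, the contract names the full signer set at mass $y(w;\theta)\ge\tau$, so the signer receives the same two terms. The indicator $\mathbf 1\{y(w;\theta)\ge\tau\}$ in $u_{SAC}$ is therefore replaced by $1$ at every rank. Under the baseline normalization, only the contract charges $k$, and it charges it at every rank. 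Combining the success and failure cases gives $u_{SAC}^{FD}(\theta,w)=e-\alpha\ell(y(w;\theta),\mu)-k=u_{open}(\theta,w)-k$ for every $(\theta,w)$.

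Next I would integrate over ranks. With $R^{FD}(\theta)=\int_0^1u_{SAC}^{FD}(\theta,w)\,dw$, the rank-indifference condition of Lemma~\ref{lem:global_selection} reads
\begin{equation*}
e-\alpha\int_0^1\ell\bigl(y(w;\theta),\mu\bigr)\,dw-k=0.
\end{equation*}
This condition is linear in $e$ with unit slope, so it has the unique solution $\bar e_{SAC}^{FD}(\theta)=\alpha\int_0^1\ell\,dw+k=\bar e_{open}(\theta)+k$, using the formula for $\bar e_{open}$ from Theorem~\ref{thm:selection_gap}. The difference is $\bar e_{open}-\bar e^{FD}_{SAC}=-k$. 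Comparing with the decomposition of $G^{acct}$, the exposure-tail term has disappeared: failed ranks now carry the same exposure as in the open arm, so there is no avoided below-threshold exposure. What remains is the friction $k$, which is no longer rescaled by $1/(1-w_\tau)$ because expressive value is now received at every rank. This is why complete failed-list disclosure leaves the contract worse by exactly $k$ in expressive-benefit units.

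For the symmetric case, I would add $k$ to $u_{open}$ as in the symmetric normalization preceding Figure~\ref{fig:selection_gap}. This gives $\bar e_{open}^{sym}=\bar e_{open}+k$, and the two payoff functions become identical pointwise. The rank-indifference values therefore coincide.

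The only delicate step is interpretive, not computational. The claim concerns rank-indifference values, which are well-defined integrals regardless of selection. If one also wants them read as selected cutoffs, the vanguard-clearance and dominance footholds in Assumption~\ref{ass:operative_domain} are needed. Here those footholds are easier to satisfy than in the baseline contract, because the payoff is simply the open payoff shifted by $-k$ and is monotone in the state. I would therefore state the identity at the accounting level and, if needed, note that Lemma~\ref{lem:global_selection} applies to the FD arm whenever its one-type economy satisfies the assumption.
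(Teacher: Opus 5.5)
Your proposal is correct and follows the paper's own route. You replace the publication indicator by one at every rank so that $u_{SAC}^{FD}=u_{open}-k$ pointwise, integrate over ranks, and solve the linear indifference condition to get $\bar e_{SAC}^{FD}=\bar e_{open}+k$; adding $k$ to the open arm then makes the two arms' rank-indifference values coincide. Your closing remark on reading this value as a selected cutoff goes beyond what the statement claims, so keep it conditional as you do: a unique crossing of $R_{open}-k$ (a level-$k$ crossing of $R_{open}$) does not follow automatically from the assumed unique zero of $R_{open}$.
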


The threshold thus helps through the privacy it provides on failure. If names become public anyway, with the same consequences as open expression, collecting them first through a contract adds only a signing cost.

\subsection{Trust and Recantation}\label{subsec:trust_recantation}

Signers must trust the contract, and the architect must allow for endorsements that may not last. A person deciding whether to sign may fear that the apparent contract is a scheme to collect names for an opponent. Even a genuine contract can collect valid endorsements from people who plan to repudiate the statement as soon as it appears. The first threat lowers the expected payoff from signing. The second requires a larger initial coalition so that enough people remain committed after publication.

At the interior evaluation state, write $q_\tau=1-w_\tau$ for the share of ranks at which an honest contract publishes, and let
\begin{equation}
h_b(\theta)=\alpha\ell\bigl(b(\theta),\mu\bigr)
\end{equation}
be the exposure harm from being identified without protection beyond the vanguard. Suppose the administrator is honest with probability $1-\delta_F$ and false with probability $\delta_F\in[0,1)$. A false administrator leaks every signer's name to the opponent, creates no protecting public coalition, and delivers no expressive benefit. The expected payoff from signing is then
\begin{equation}
u_{SAC}^{F}(\theta,w)
=(1-\delta_F)\ind\{y(w;\theta)\ge\tau\}
\bigl(e-\alpha\ell(y(w;\theta),\mu)\bigr)
-\delta_F h_b(\theta)-k.
\label{eq:false_contract_payoff}
\end{equation}
The probability $1-\delta_F$ is the signer's trust that the advertised contract is genuine.

\begin{proposition}[Trust and recantation margins]\label{prop:trust_recantation}
The two threats have the following effects.
\begin{enumerate}
\item \emph{False-contract risk.} The contract's rank-indifference value becomes
\begin{equation}
\bar e_{SAC}^{F}(\theta)
=\bar e_{SAC}(\theta)
+\frac{\delta_F\bigl(h_b(\theta)+k\bigr)}{(1-\delta_F)q_\tau}.
\label{eq:false_contract_rank_value}
\end{equation}
It equals the baseline value at $\delta_F=0$ and rises strictly with $\delta_F$. Under the false-contract selection domain in Supplemental Appendix~\ref{app:trust_recantation}, Corollary~\ref{cor:false_contract_selection} extends the noise-limit selection result to equation~\eqref{eq:false_contract_payoff}. The conditional noise-limit selected expressive-benefit requirement is
\begin{equation}
e_{SAC}^{sel,F}(\theta)
=\max\{\bar e_{SAC}^{F}(\theta),\alpha\ell(\tau,\mu)\}.
\end{equation}
For a protected type with $e>\bar e_{SAC}(\theta)$, define the minimum trust needed to pass the rank-indifference test by
\begin{equation}
t_{sign}^{*}(\theta;e)
=\frac{h_b(\theta)+k}
{h_b(\theta)+k+q_\tau\bigl(e-\bar e_{SAC}(\theta)\bigr)}.
\label{eq:signer_trust_threshold}
\end{equation}
In the noise limit, the protected type lies on the participating side of the rank boundary when $1-\delta_F>t_{sign}^{*}$ and on the nonparticipating side when $1-\delta_F<t_{sign}^{*}$, with rank indifference at equality. Holding the other terms in equation~\eqref{eq:signer_trust_threshold} fixed, minimum trust rises with leak harm $h_b(\theta)$.

This participation threshold is distinct from the trust needed for the contract to outperform open expression. If $G^{acct}(\theta)>0$ and $\bar e_{open}(\theta)>\alpha\ell(\tau,\mu)$, the selected gap remains positive exactly when
\begin{equation}
1-\delta_F>
t_{adv}^{*}(\theta)
\equiv
\frac{h_b(\theta)+k}
{h_b(\theta)+k+q_\tau G^{acct}(\theta)}.
\label{eq:advantage_trust_threshold}
\end{equation}
Holding the other terms fixed, this minimum trust also rises with leak harm.

\item \emph{Bounded recantation.} Consider a nonatomic post-release extension in which authenticated endorsements may come from bad-faith participants as well as supporters. Recantations are public before the opponent responds, and a signer's exposure depends on the mass $m(C^{stay})$ that continues to stand behind the statement, not on the larger historical list first released. Let $0<\rho\le1$ be the minimum safe continuing mass. Suppose every mass up to the maximum authenticated-endorsement capacity is attainable and the worst-case uncertainty set permits any recanting subset up to the stated bound. If at most mass $\bar z\ge0$ can recant, the smallest size threshold that guarantees $m(C^{stay})\ge\rho$ is
\begin{equation}
\tau_{\bar z}=\rho+\bar z.
\label{eq:absolute_recantation_margin}
\end{equation}
Nontrivial robust release requires this threshold not to exceed the maximum authenticated-endorsement capacity. Alternatively, suppose a known vanguard of fixed mass $b_0<\rho$ remains committed and at most a fraction $\phi\in[0,1)$ of endorsements beyond that vanguard can recant. The smallest robust threshold is
\begin{equation}
\tau_\phi
=b_0+\frac{\rho-b_0}{1-\phi},
\qquad
\tau_\phi-\rho
=\frac{\phi(\rho-b_0)}{1-\phi}.
\label{eq:proportional_recantation_margin}
\end{equation}
Again, nontrivial robust release requires enough authenticated-endorsement capacity to reach $\tau_\phi$. Both margins are sharp within this uncertainty set: any smaller attainable threshold permits an allowed recantation that leaves less than $\rho$ publicly committed.
\end{enumerate}
\end{proposition}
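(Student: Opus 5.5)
Part 1 is a direct rank-integral computation, and part 2 is a worst-case recantation argument; I treat them separately.

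\emph{Part 1, false-contract risk.} I integrate equation~\eqref{eq:false_contract_payoff} over $w\in[0,1]$. By the definition of $w_\tau$, the indicator equals one exactly on $[w_\tau,1]$. Setting the rank integral to zero gives
\[
(1-\delta_F)\Bigl(q_\tau e-\alpha\int_{w_\tau}^1\ell\bigl(y(w;\theta),\mu\bigr)dw\Bigr)=\delta_F h_b(\theta)+k .
\]
Solving for $e$ gives
\[
\bar e_{SAC}^F=\frac{\alpha}{q_\tau}\int_{w_\tau}^1\ell\,dw+\frac{k}{(1-\delta_F)q_\tau}+\frac{\delta_F h_b}{(1-\delta_F)q_\tau}.
\]
Since $\frac{k}{(1-\delta_F)q_\tau}-\frac{k}{q_\tau}=\frac{\delta_F k}{(1-\delta_F)q_\tau}$, subtracting the formula for $\bar e_{SAC}$ in Theorem~\ref{thm:selection_gap} yields equation~\eqref{eq:false_contract_rank_value}. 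The added term vanishes at $\delta_F=0$. It is strictly increasing in $\delta_F$ because $\delta_F/(1-\delta_F)$ is strictly increasing and $h_b+k>0$, using $k>0$.

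For the selected requirement, I invoke the supplemental Corollary~\ref{cor:false_contract_selection} and argue as in Theorem~\ref{thm:selection_gap}: combining threshold protection with rank indifference gives the $\max$ form.

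\emph{Trust thresholds.} Write $t=1-\delta_F$ and $D=h_b+k$. The rank integral of the payoff is
\[
t\,q_\tau\bigl(e-\bar e_{SAC}\bigr)+t\,k-D+t\,h_b=t\bigl(q_\tau(e-\bar e_{SAC})+D\bigr)-D .
\]
This is strictly increasing in $t$ and equals zero exactly at $t=t^*_{sign}$. By Lemma~\ref{lem:global_selection}, its sign determines which side of the rank boundary the type is on. Differentiating $D/(D+c)$ with $c>0$ shows that $t^*_{sign}$ is increasing in $h_b$.

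For $t^*_{adv}$: when $\bar e_{open}>\alpha\ell(\tau,\mu)$, the selected gap is positive exactly when $\bar e^F_{SAC}<\bar e_{open}$. I rewrite this as
\[
\frac{(1-t)D}{t q_\tau}<\bar e_{open}-\bar e_{SAC}=G^{acct},
\]
which rearranges to $t>D/(D+q_\tau G^{acct})$. The monotonicity in $h_b$ follows in the same way.

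\emph{Part 2, bounded recantation.} The opponent's worst case is to recant the largest allowed mass.

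In the absolute case, release at mass $y$ guarantees $m(C^{stay})\ge y-\bar z$ with equality attainable. So release is robust iff $y\ge\rho+\bar z$. For sharpness, any attainable $\tau<\rho+\bar z$ allows release at $y=\tau$ followed by a recantation of $\min\{\bar z,\tau\}$, which leaves less than $\rho$ committed.

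In the proportional case, the remaining mass is $b_0+(1-\phi)(y-b_0)$. This is at least $\rho$ iff $y\ge\tau_\phi$, and subtracting $\rho$ gives the stated margin. Sharpness follows in the same way, using the assumption that every mass is attainable. The capacity caveat is the requirement that $\tau_{\bar z}$ or $\tau_\phi$ be attainable.

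The main obstacle is the selection step in part 1, because the leak term $-\delta_F h_b$ is paid on failure. This changes the dominance regions: joining must still be strictly dominant at high signals, which requires $(1-\delta_F)(e-\alpha\ell(b(\theta_H),\mu))>\delta_F h_b+k$. I would defer that condition to the supplemental false-contract domain and only verify that the payoff keeps the single-crossing and complementarity structure used in Lemma~\ref{lem:global_selection}. This holds because the added term is constant in $w$.
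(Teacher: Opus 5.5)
Your argument is correct and follows the paper's proof. You integrate equation~\eqref{eq:false_contract_payoff} over ranks, solve for $e$, and subtract $\bar e_{SAC}$. You then rearrange the same linear inequality in $1-\delta_F$ to get $t^*_{sign}$ and $t^*_{adv}$, using the $\min$ form of the selected gap. In part 2 you let the maximal allowed mass recant. Your algebra checks out, including the dominance condition $(1-\delta_F)e>h_b(\theta_H)+k$.

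One justification in your closing paragraph is wrong, although the proof does not need it. That $-\delta_F h_b$ is constant in $w$ gives strategic complementarity: the jump at $w_\tau$ is $(1-\delta_F)\bigl(e-\alpha\ell(\tau,\mu)\bigr)\ge0$. It says nothing about single crossing. Single crossing concerns $R^F_{SAC}(s)=(1-\delta_F)R_{SAC}(s)-\delta_F\bigl(h_b(s)+k\bigr)$ as a function of the state, and there the added term varies with $s$ through $b(s)$.

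The paper says explicitly that the baseline crossing is not inherited. It builds the unique zero $\theta^*_F$ into Assumption~\ref{ass:false_contract_selection}. It also gives a sufficient condition: $R_{SAC}$ strictly increasing wherever it is nonnegative, together with $h_b(s)$ weakly decreasing because $b$ rises and $\ell$ falls. That same monotonicity of $h_b$ is what keeps the false-contract payoff nondecreasing in the state for the likelihood-ratio step, which constancy in $w$ also does not deliver.

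Since the proposition is stated under that domain, you should simply cite the assumption rather than claim to verify single crossing. For the side-of-boundary claim, invoke Corollary~\ref{cor:false_contract_selection} rather than Lemma~\ref{lem:global_selection}.
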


Supplemental Appendix~\ref{app:trust_recantation} proves the proposition and states the false-contract selection conditions. The model takes the threats as given: $\delta_F$ is the signer's assessed probability of a false contract, while $\bar z$ and $\phi$ are credible bounds on recantation. It does not derive these quantities from an administrator's or attacker's strategic choices. Authentication can establish eligibility and prevent duplicate endorsements, but it cannot establish motive.

The recantation result assumes that only people who continue to stand behind the statement protect one another. If the opponent instead targets the original named list, even repudiated endorsements can occupy its attention; in that case, no extra margin for recantation is needed. The result answers how many names a safe release requires, not how recantation risk changes the decision to sign. That second question cannot be answered by simply replacing $\tau$ in Theorem~\ref{thm:selection_gap}. When protection depends on $C^{stay}$, exposure on success depends on the mass that remains committed. A participation analysis must change that exposure term and recheck complementarity, dominance, and the rank crossing.

The design must therefore do more than promise to wait for a large list. It must remove enough exposure to justify signing, set a protective and reachable threshold, convince people that their names are in trustworthy custody, and allow for any credible bound on recantation.

\section{Applications}\label{sec:applications}

Where would a Social Assurance Contract help? The model points to settings where an identifiable group faces retaliation and a larger public coalition protects each member. The group needs a committed vanguard, a protective threshold it can reach, and an administrator or protocol trusted to keep failed lists private.

Applying the model requires knowing who can reach, trust, punish, and observe whom. Recruitment ties determine who hears the invitation and help set vanguard mass $b(\theta)$ and participation capacity $Y(\theta)$. Trust in the architect and administrator determines whether people believe the invitation and publication condition; distrust shrinks the effective vanguard or capacity and raises signing friction $k$. The employer, peers, professional body, or public able to punish signers determine $\mu$ and $\ell$. How that audience learns about the campaign determines whether it sees a failed attempt despite seeing no names. Private recruitment can preserve attempt privacy; public advertising or a visible counter can reveal the attempt.

Safety in numbers has a concrete basis when retaliation consumes scarce attention, managerial time, investigative resources, or political capital. Punishing a larger group is also more visible and costly, and its members can document, contest, and publicize retaliation together. In these settings, a larger released coalition lowers each signer's material cost: $\ell(y',\mu)<\ell(y,\mu)$ for $y'>y$.

Whether a larger group supplies this protection is an empirical question. If the audience can sanction each additional signer just as easily, exposure stays flat and the contract has no exposure-tail advantage. If a larger coalition provokes stronger sanctions, exposure can rise instead. A different model is also needed when one report suffices or only the first reporter receives a reward. Others' participation can then discourage one's own, reversing the complementarity on which the comparison rests.

Collective whistleblowing is one possible application. Employees in a defined unit form the eligible population, and an employer or professional audience can retaliate against them. Wells Fargo's independent investigation documents sales pressure, internal complaints, and employee terminations \citep{independentdirectorsoftheboardofwellsfargoandcompanySalesPracticesInvestigation2017}; CFPB and DOJ records document the enforcement that followed \citep{u.s.consumerfinancialprotectionbureauConsumerFinancialProtection2016,u.s.departmentofjusticeWellsFargoAgrees2020}. These records establish the retaliation problem. Applying the model would also require evidence that a larger group protects reporters and that a committed vanguard can clear the threshold in favorable states. Protected or already-disclosed reporters could supply that vanguard. A regulator, independent counsel, or group of trustees verifies employee status, holds submissions, and releases a named collective report at the threshold. Below it, names remain private. Each reporter pays signing friction $k$ and is named on success.

Pre-petition labor organizing also fits when known organizers or openly pro-union workers supply a credible vanguard and a larger committee lowers each member's exposure. The prospective bargaining unit is the eligible population, the employer is the audience, and the threshold is the target committee size before members go public. This stage precedes a certification election, in which the NLRB converts a majority vote into certification and formal representation \citep{u.s.nlrbConductElectionsNational2026,u.s.nlrbYourRightForm2026}. Conditional disclosure then lets organizers recruit a named committee without exposing workers if the committee stays too small to launch.

Public letters and threshold commitments use the same structure. Senior or already-public signers form the vanguard, and an independent committee or platform holds the list. The model fits when the vanguard is credible, the threshold is reachable, and appearing on a short list is especially costly, with that cost falling as the list grows. The same conditions apply to named organizing committees and public statements backed by a fixed constituency.

Custody makes the publication promise credible. An independent holder can enforce the rule directly. A distributed design instead combines authenticated encrypted submissions, a verifiable threshold decision, and a required number of independent trustees acting together. Under the design's security assumptions, no single administrator can read or release a failed list. Secure allegation escrows perform closely related functions \citep{arunFindingSafetyNumbers2020}; \citet*{mangipudiCollusionDeterrentThresholdInformation2023} develop a collusion-deterrent threshold escrow. Supplemental Appendix~\ref{app:implementation} describes how to combine these components and which security and governance assumptions the design needs. The proposed architecture still requires implementation and audit. Counsel, a regulator, a union, or a platform can operate the escrow without joining the coalition. Proposition~\ref{prop:trust_recantation} shows how much doubt about custody signers can tolerate before they prefer to stay out.

\subsection{Testable Implications} Experiments can separate the value of conditional disclosure from the costs of arranging it. Under the selection conditions, the contract lowers the participation cutoff when avoided exposure outweighs signing costs adjusted for publication risk and the threshold protects those recruited. The vanguard and capacity conditions matter: the selection argument needs a vanguard that clears the threshold in favorable states, and publication requires a threshold within reach. Raising the threshold improves protection on publication but demands more endorsements and makes successful expression bear a larger share of signing costs. Complete failed-list disclosure eliminates the exposure-tail advantage.

Trust and recantation lead to two further tests. Greater doubt that the contract is genuine raises the value of expression needed to sign, especially when a leak would be harmful. A larger credible bound on recantation requires more names at publication and can put safe release beyond reach. Laboratory and field experiments can vary vanguard size, threshold, custody, failure disclosure, administrator credibility, and recantation risk separately to identify these effects.

These tests concern participation, not whether the resulting coalition benefits society. Social value depends on the statement, the action it produces, and its effects on nonsigners and targets.

\section{Concluding Remarks}\label{sec:discussion}
Conditional disclosure lets people agree to speak together without being named if too few others join. In the model, this removes exposure between the vanguard and the publication threshold. The exposure-tail decomposition measures that benefit against signing costs adjusted for publication risk, holding the state and the audience's beliefs fixed. Turning the payoff advantage into additional participation also requires a threshold that protects the people recruited. The contract expands participation when both tests are met, the vanguard clears the threshold in favorable states, and failed lists remain private.

A survey reveals hidden support; a contract helps turn that support into a named coalition. The distinction is what happens to someone who agrees to participate and finds that few others do. Under open expression, that person's name is public. Under the contract, it stays private. Publishing failed lists removes this advantage and leaves only the extra cost of signing.

Each part of the institution has a job. The vanguard makes joining worthwhile in favorable states even without other marginal participants. The threshold sets the protection offered to new signers, and failure privacy delivers it. Doubt about custody raises the value of expression needed to sign; a bound on recantation determines how many extra names a safe release needs. The Supplemental Appendix shows how the decision to sign changes in a finite group, where one endorsement can trigger publication, and describes how existing institutional and cryptographic components can support the contract. Deployment requires those components, their governance, and the user interface to withstand the relevant attacks.

Experiments should vary vanguard size, threshold feasibility, signing costs, and failure disclosure separately. The exposure-tail formula predicts which contracts expand public coalitions, which merely add costs, and why.

\let\articleSection\section
\appendix

\section{Appendix A. Proofs}\label{app:proofs}

Most results follow from payoff accounting or from the fact that exposure falls with coalition size. The selection lemma also needs to explain why agents converge on the same behavior. Its proof allows strategies to differ across agents and to vary nonmonotonically with signals. It first finds signals at which one action is best regardless of others, then bounds every surviving strategy between two cutoffs. Both cutoffs approach the same boundary as noise vanishes. The proofs below follow the order of the results in the text.

\begin{proof}[Proof of Theorem~\getrefnumber{thm:no_selection}]
Because $\ell$ strictly decreases with coalition mass, expression becomes more attractive as more people express. In the vanguard-only profile, a marginal expresser faces mass $b(\theta)$ and payoff $e-\alpha\ell(b(\theta),\mu)<0$, so silence is a strict best response. In the full-expression profile, she faces $Y(\theta)$ and payoff $e-\alpha\ell(Y(\theta),\mu)\ge0$, so expression is a best response. Both profiles are therefore equilibria. Information matters only by changing $\mu$ or the evaluated state; multiplicity remains at every fixed pair inside the displayed band.
\end{proof}

\begin{proof}[Proof of Lemma~\getrefnumber{lem:global_selection}]
First, better states and greater participation make joining more attractive, which identifies signals at which one action is dominant. Next, two cutoff sequences bound the surviving strategies from opposite sides. Finally, an agent at the cutoff evaluates ranks uniformly as signal noise vanishes. Her payoff therefore converges to the rank average $R$.

Fix one arm and omit its subscript, writing $u$, $R$, and $\theta^*$ for its payoff, rank integral, and unique crossing.

\textit{General pointwise deletion protocol.} We first specify exactly how to rule out actions. The process covers every measurable strategy, including asymmetric and nonmonotone profiles, and evaluates each agent at every signal. This requires specifying beliefs even at individual signals that had zero probability before observation. In the diffuse branch, for every measurable state set $B$, the posterior kernel---the probability assigned to the set after observing the signal---is
\[
P_\sigma(B\mid x)=\int \mathbf 1\{x-\sigma\eta\in B\}\,dF_\eta(\eta).
\]
In the proper-prior branch, the posterior kernel is
\[
P_\sigma(d\theta\mid x)
=\frac{g(\theta)f_\eta((x-\theta)/\sigma)\,d\theta}
{\int g(t)f_\eta((x-t)/\sigma)\,dt}.
\]
Write $P_\sigma(\cdot\mid x)$ for the applicable branch. For every agent $i$ and signal $x$, initially both actions are available: $\mathcal{A}_i^0(x)=\{0,1\}$. Given the pointwise action sets at round $\gamma$, let $\mathcal S^\gamma$ contain all jointly measurable profiles $\varphi$ satisfying $\varphi_i(x)\in\mathcal A_i^\gamma(x)$ for every $(i,x)$. Against any arbitrary surviving profile $\varphi\in\mathcal S^\gamma$, the general \emph{interim} payoff gain from participating is
\begin{equation}
\Psi_i(x;\varphi)=\int u\bigl(\theta,w_\varphi(\theta)\bigr)P_\sigma(d\theta\mid x).
\end{equation}
Deleted actions stay deleted. At successor round $\gamma+1$, action 1 is deleted from $\mathcal A_i^\gamma(x)$ when $\sup_{\varphi\in\mathcal S^\gamma}\Psi_i(x;\varphi)<0$: joining loses even under the most favorable remaining behavior by others. Action 0 is deleted when $\inf_{\varphi\in\mathcal S^\gamma}\Psi_i(x;\varphi)>0$: joining wins even under the least favorable remaining behavior. At a limit ordinal $\lambda$, set $\mathcal{A}_i^\lambda(x)=\bigcap_{\beta<\lambda}\mathcal{A}_i^\beta(x)$, form the corresponding jointly measurable profiles $\mathcal S^\lambda$, and resume successor-round deletion.

The process eventually stabilizes because each nonstable successor round removes at least one agent-signal-action triple $(i,x,d)$ from a fixed set. At a stable round $\gamma^*$, $\mathcal A_i^{\gamma^*+1}(x)=\mathcal A_i^{\gamma^*}(x)$ for every $(i,x)$, and the surviving profiles are the members of $\mathcal S^{\gamma^*}$. Ordinals let the process continue beyond the first infinite sequence of rounds: retain what survived all earlier rounds, then check again for actions that can be removed. They organize the reasoning without adding an economic assumption. Defining survival pointwise makes the theorem apply at every signal, including individual signals with probability zero.

In the open arm, $y(w;\theta)$ rises with the state and marginal participation, while $\ell$ falls with coalition mass. In the contract arm, payoff is $-k$ below the threshold, jumps weakly upward at publication by $\tau$-protection, and then rises. Both payoffs are therefore bounded and nondecreasing in the state and in others' participation. At $\theta_L$, open expression pays at most $e-\alpha\ell(Y(\theta_L),\mu)<0$, and the unreachable contract pays $-k<0$. At $\theta_H$, the vanguard clears the threshold and clause 5 makes participation strictly profitable in both arms. Staying out is best in low states even under full participation; joining is best in high states even with no marginal participation. These are the dominance regions.

\textit{Cutoff bounds for pointwise deletion.} The gain $\Psi_i(x;\varphi)$ tells us which actions can be deleted against arbitrary profiles in $\mathcal S^\gamma$. We can bound the surviving choices with a simpler calculation: suppose everyone else joins above the same signal cutoff $c$. Define
\begin{equation}
\Psi_\sigma(x;c)
=\E\!\left[u\bigl(\theta,w_c(\theta)\bigr)\mid x_i=x\right],
\qquad
w_c(\theta)=1-F_\eta\!\left(\frac{c-\theta}{\sigma}\right),
\end{equation}
with $w_{-\infty}=1$ and $w_{+\infty}=0$. The function $\Psi_\sigma(x;c)$ is the expected gain from joining after signal $x$ when all other marginal agents use cutoff $c$. A higher signal makes favorable states more likely, so monotone likelihood ratios and state-monotone payoffs make $\Psi_\sigma$ nondecreasing in $x$. A higher cutoff reduces others' participation, so strategic complementarity makes $\Psi_\sigma$ nonincreasing in $c$.

The posterior kernels give an explicit continuity argument. In the diffuse branch,
\begin{equation}
\Psi_\sigma(x;c)=
\int u\!\left(x-\sigma\eta,
1-F_\eta\!\left(\eta+\frac{c-x}{\sigma}\right)\right)dF_\eta(\eta).
\end{equation}
In the proper-prior branch, the same expression is
\begin{equation}
\frac{\int u\!\left(x-\sigma\eta,
1-F_\eta\!\left(\eta+\frac{c-x}{\sigma}\right)\right)
g(x-\sigma\eta)f_\eta(\eta)\,d\eta}
{\int g(x-\sigma\eta)f_\eta(\eta)\,d\eta}.
\end{equation}
The denominator is positive on $\mathcal{J}$. For any convergent sequence $(x_j,c_j)$, the integrands converge at every error realization except possibly where limiting coalition mass equals $\tau$. As $\eta$ rises, both the state and others' participation fall, so equality holds for at most one value of $\eta$. The continuous error distribution assigns that value probability zero. Bounded payoffs and continuous primitives then allow dominated convergence to pass the limit through the integral. Thus $\Psi_\sigma$ is jointly continuous in the signal and conjectured cutoff in either branch. Extending $g$ by zero creates possible discontinuities only at the endpoints of its support, which also have zero probability under the continuous error density.

The cutoff $-\infty$ is the optimistic conjecture that every marginal agent participates; $+\infty$ is the pessimistic conjecture that none does. Start with $\underline c_0=-\infty$ and $\bar c_0=+\infty$. The new lower cutoff is the first signal at which joining pays under the previous optimistic bound. The new upper cutoff is the last signal at which joining does not pay under the previous pessimistic bound:
\begin{equation}
\begin{aligned}
\underline c_{n+1}
&=\inf\{x:\Psi_\sigma(x;\underline c_n)\ge0\},\\
\bar c_{n+1}
&=\sup\{x:\Psi_\sigma(x;\bar c_n)\le0\}.
\end{aligned}
\end{equation}
Uniform signal dominance makes both boundaries finite after the first round and keeps them in $\mathcal{J}$. Each round moves the optimistic lower boundary weakly upward and the pessimistic upper boundary weakly downward:
\begin{equation}
\underline c_n\uparrow\underline c_\sigma,
\qquad
\bar c_n\downarrow\bar c_\sigma,
\qquad
\underline c_\sigma\le\bar c_\sigma.
\end{equation}
Define $\beta^-(c)=\inf\{x:\Psi_\sigma(x;c)\ge0\}$ and $\beta^+(c)=\sup\{x:\Psi_\sigma(x;c)\le0\}$. These are the lower and upper best-response boundaries to cutoff $c$. Both maps are nondecreasing, and continuity gives $\beta^-(c)\le\beta^+(c)$. Because $\underline c_{n+1}=\beta^-(\underline c_n)$ and $\bar c_{n+1}=\beta^+(\bar c_n)$, induction from $-\infty\le+\infty$ gives the two monotone directions and
\begin{equation}
\underline c_{n+1}
=\beta^-(\underline c_n)
\le\beta^-(\bar c_n)
\le\beta^+(\bar c_n)
=\bar c_{n+1}.
\end{equation}

The two cutoff sequences bound every surviving strategy, even though those strategies need not themselves be cutoffs. After $n$ rounds, every survivor obeys
\begin{equation}
\mathbf 1\{x>\bar c_n\}
\le \varphi_i(x)\le
\mathbf 1\{x\ge\underline c_n\}.
\end{equation}
Every surviving strategy prescribes participation above the upper boundary and nonparticipation below the lower boundary; only the interval between them remains unresolved. The base case is $0\le \varphi_i(x)\le1$. If the bounds hold at round $n$, exact aggregation and continuity of $F_\eta$ imply for every measurable surviving profile $\varphi$ that
\begin{equation}
w_{\bar c_n}(\theta)\le w_\varphi(\theta)\le w_{\underline c_n}(\theta)
\qquad\text{for every }\theta.
\end{equation}
If $x>\bar c_{n+1}$, then $\Psi_\sigma(x;\bar c_n)>0$. Joining pays even under the lowest participation allowed by the previous bounds, so action 0 is deleted against every profile in $\mathcal S^n$. If $x<\underline c_{n+1}$, then $\Psi_\sigma(x;\underline c_n)<0$. Joining loses even under the highest allowed participation, so action 1 is deleted. This proves the next-round bounds. The bounding cutoffs are comparison devices, not necessarily surviving profiles. The proof never assumes that intermediate survivors are symmetric or monotone.

\textit{Limits of the cutoff iterations.}

Continuity and dominance make the agent at each finite-round boundary indifferent:
\begin{equation}
\Psi_\sigma(\underline c_{n+1};\underline c_n)=0,
\qquad
\Psi_\sigma(\bar c_{n+1};\bar c_n)=0.
\end{equation}
Passing to the limits yields
\begin{equation}
\Psi_\sigma(\underline c_\sigma;\underline c_\sigma)=0,
\qquad
\Psi_\sigma(\bar c_\sigma;\bar c_\sigma)=0.
\end{equation}
Hence $\mathbf 1\{x\ge\underline c_\sigma\}$ and $\mathbf 1\{x>\bar c_\sigma\}$ are best responses to themselves under opposite tie rules at exact indifference. Transfinite induction verifies that both endpoint profiles survive the full deletion process. Each is initially admissible. If its prescribed actions survive through round $\gamma$, the profile is a measurable selector in $\mathcal S^\gamma$. Each prescribed action is a best response to that selector, so it cannot be strictly dominated against every admissible profile in round $\gamma+1$. At a limit ordinal, the actions remain because they survived every earlier round. Thus both endpoint profiles survive every round. Every stable profile also survives each finite round, so the limiting outer bounds place it between the endpoints. Further deletion cannot move either bound.

\textit{Rank evaluation and the noise limit.} Under the diffuse posterior kernel, an agent at the boundary $x_i=c$ represents the unknown state as $\theta=c-\sigma\eta$, with $\eta\sim F_\eta$. The probability-integral transform says that applying a continuous distribution function to a draw from that distribution produces a uniform random variable. It therefore makes $1-F_\eta(\eta)$ uniform on $[0,1]$, so the boundary payoff is
\begin{equation}
\Psi_\sigma(c;c)
=\int_0^1u\bigl(c-\sigma F_\eta^{-1}(1-w),w\bigr)\,dw.
\end{equation}
The uniform weighting thus comes from the private signals. An agent at the boundary remains uncertain about what fraction of others received higher signals, even as her uncertainty about the state shrinks. The calculation assigns equal weight to every fraction between none and all. For any $c_\sigma\to c\in\mathcal{J}$, bounded payoffs and dominated convergence give $\Psi_\sigma(c_\sigma;c_\sigma)\to R(c)$. The contract payoff jumps at the threshold, but the jump occurs at no more than one rank and does not affect the integral.

Under a proper prior, extend $g$ by zero outside its compact support. The posterior density of $\eta=(c_\sigma-\theta)/\sigma$ is proportional to $g(c_\sigma-\sigma\eta)f_\eta(\eta)$, and the same payoff is
\begin{equation}
\frac{\int u\bigl(c_\sigma-\sigma\eta,1-F_\eta(\eta)\bigr)
g(c_\sigma-\sigma\eta)f_\eta(\eta)\,d\eta}
{\int g(c_\sigma-\sigma\eta)f_\eta(\eta)\,d\eta}.
\end{equation}
The denominator converges to $g(c)>0$, and dominated convergence again yields $R(c)$. As in the diffuse case, the single moving publication rank has posterior probability zero and does not change the limit.

Both boundary sequences lie in the compact interval $\mathcal{J}$, so every subsequence has a convergent further subsequence. At any limit $c$, boundary indifference and either rank calculation give $R(c)=0$. Strict single crossing gives the unique zero $\theta^*$. Every convergent subsequence therefore has the same limit, so both full sequences converge to $\theta^*$:
\begin{equation}
\underline c_\sigma\longrightarrow\theta^*,
\qquad
\bar c_\sigma\longrightarrow\theta^*.
\end{equation}
Every surviving strategy lies between these cutoffs. For every $\delta>0$, sufficiently small $\sigma$ therefore forces participation above $\theta^*+\delta$ and nonparticipation below $\theta^*-\delta$.
\end{proof}

\begin{proof}[Proof of Theorem~\getrefnumber{thm:selection_gap}]
Under open expression, expressive value and exposure occur at every rank, so the rank integral is $e-\alpha\int_0^1\ell(y(w;\theta),\mu)\,dw$. Setting it to zero gives the first rank-indifference value. Under the contract, expressive value and exposure occur only at or above $w_\tau$, while friction is paid at every rank:
\begin{equation}
\int_0^1 u_{SAC}(\theta,w)\,dw=(1-w_\tau)\,e-\alpha\int_{w_\tau}^1\ell\bigl(y(w;\theta),\mu\bigr)dw-k,
\end{equation}
Setting this expression to zero gives the second value. For $G^{acct}$, subtract the contract value from the open value, split the open integral below and above the threshold, and write each part as interval length times its average. In the monotone contract domain, both rank indifference and $\tau$-protection must hold. The smallest admissible expressive value is therefore $e^{sel}_{SAC}=\max\{\bar e_{SAC},\alpha\ell(\tau,\mu)\}$. Subtracting this maximum from $\bar e_{open}$ and applying $x-\max\{y,z\}=\min\{x-y,x-z\}$ yields $G^{sel}$.

With symmetric friction, subtracting $k$ from the open payoff adds $k$ to its expressive-benefit cutoff. The contract cutoff is unchanged. Adding $k$ to the baseline gap changes its friction term from $-k/(1-w_\tau)$ to $-k w_\tau/(1-w_\tau)$.
\end{proof}

\begin{proof}[Proof of Corollary~\getrefnumber{cor:gap_properties}]
For part 1, let each $(e,\alpha)$ index a separate economy. If it satisfies Assumption~\ref{ass:operative_domain}, Lemma~\ref{lem:global_selection} applies to both arms. The inequalities place its marginal type above the contract cutoff $e^{sel}_{SAC}$ and below the open-expression cutoff $\bar e_{open}$. The lower-bound maximum imposes rank indifference and $\tau$-protection. The interval is nonempty exactly when $G^{sel}>0$; equality at the recruitment-coverage boundary leaves no strict interval. For part 2, without safety in numbers, exposure averages below and above the threshold are equal, so their difference vanishes. For part 3, as the vanguard approaches the threshold, $w_\tau\downarrow0$ and the protected exposure tail vanishes, leaving signing cost $k$. Part 4 compares the rank averages and protection requirement at the same fixed $\mu$; neither a type distribution nor a different audience index enters.
\end{proof}

\begin{proof}[Proof of Corollary~\getrefnumber{cor:common_rank}]
Under common weighting $H$, open expression has expected rank payoff $e-\alpha\E_H[L(W)]$, which yields $\bar e_{open}^H$. Under the contract, expressive value and exposure occur only when $W\ge w_\tau$, with probability $p_H$, while friction is always paid. Its expected rank payoff is
\begin{equation}
p_H e-\alpha p_H\E_H[L(W)\mid W\ge w_\tau]-k,
\end{equation}
which yields $\bar e_{SAC}^H$. The law of total expectation gives
\begin{equation}
\E_H[L(W)]
=(1-p_H)\E_H[L(W)\mid W<w_\tau]
+p_H\E_H[L(W)\mid W\ge w_\tau].
\end{equation}
Substitution and subtraction give the identity. Because $y(w;\theta)$ rises in $w$ and $\ell$ strictly decreases in coalition mass, exposure at every below-threshold rank exceeds exposure at every successful rank. Since $p_H\in(0,1)$, both regions have positive probability, so their conditional exposure averages differ strictly.
\end{proof}

\begin{proof}[Proof of Proposition~\getrefnumber{prop:safe_disclosure}]
If $\rho=+\infty$, no coalition size protects the type. Safety forces $\chi(y)=0$ on $[0,1]$, the unique maximal safe indicator. Otherwise, every admissible binary indicator must withhold below $\rho$. At $y\ge\rho$, jointly publishing the complete signer set is safe, so $\chi^*(y)=1$. Thus $\chi^*$ publishes wherever any safe indicator can and weakly dominates every safe indicator pointwise. Any other safe indicator withholds at some $y\ge\rho$ and is strictly less permissive there. The stated class permits no partial-list release.
\end{proof}

\begin{proof}[Proof of Corollary~\getrefnumber{cor:committed_thresholds}]
The commitment forces $\chi(y)=0$ below $\tau$. Because $\tau\ge\rho$, every coalition at or above the threshold protects the type, so joint publication is safe. The indicator $\chi_\tau(y)=\mathbf 1\{y\ge\tau\}$ is feasible and publishes at every permitted size. It weakly dominates every other safe indicator satisfying the commitment, strictly wherever another indicator withholds at $y\ge\tau$.
\end{proof}

\begin{proof}[Proof of Corollary~\getrefnumber{cor:belief_only_safety}]
A belief-only intervention changes the audience-belief index from $\mu$ to an attainable $\mu'$ but adds no names. An isolated marginal expresser still appears beside only vanguard mass $b(\theta)$. Exposure dominance gives payoff $e-\alpha\ell(b(\theta),\mu')<0$ at every such index. Under the maintained payoff technology, safe public association therefore requires mass at least $\rho(e,\alpha,\mu')>b(\theta)$.
\end{proof}

\begin{proof}[Proof of Lemma~\getrefnumber{lem:threshold_shape}]
First suppose $\chi=\mathbf{1}\{y\ge\tau\}$. For a type with nonnegative success payoff wherever publication occurs, participation pays $-k$ below the threshold, jumps weakly upward at the threshold, and rises above it because $\ell$ strictly decreases with coalition mass. Participation payoff is therefore nondecreasing in size.

Conversely, suppose the publication set is not an upper set. Then for some $y_1<y_2$, the mechanism publishes at $y_1$ but withholds at $y_2$. Let $\underline y=\inf\{y:\chi(y)=1\}$ be the lower edge of the publication set. Richness supplies a recruited type with $e>\alpha\ell(\underline y,\mu)$. Since exposure falls with size and every published mass is at least $\underline y$, this type has positive success payoff whenever publication occurs. Yet its participation payoff falls from $e-\alpha\ell(y_1,\mu)-k>-k$ at $y_1$ to $-k$ at the larger $y_2$. This violates monotonicity for a recruited type.

Richness is necessary for the converse. Without it, an indicator can switch on irregularly at low sizes where publication benefits no recruited type. Those irrelevant publication points need not reverse any relevant payoff even though the publication set is not an upper set.
\end{proof}

\begin{proof}[Proof of Proposition~\getrefnumber{prop:full_disclosure_boundary}]
Under complete failed-list disclosure, names are public after success and failure. By assumption, crossing the threshold also leaves expressive value and exposure unchanged. Contract-specific friction $k$ is therefore the only payoff difference from open expression, so $u_{SAC}^{FD}=u_{open}-k$ at every state and rank. Integrating gives
\begin{equation}
\int_0^1u_{SAC}^{FD}(\theta,w)\,dw
=e-\alpha\int_0^1\ell\bigl(y(w;\theta),\mu\bigr)\,dw-k.
\end{equation}
Setting this expression to zero gives $\bar e_{SAC}^{FD}=\bar e_{open}+k$. If open expression also costs $k$, the two rank-indifference values are equal.
\end{proof}

\let\section\articleSection

\bibliographystyle{aea}
\bibliography{conditional_disclosure}

\ifbuildwithsupplement
  \clearpage
  \documentclass[conditional_disclosure.tex]{subfiles}

\begin{document}

\ifSubfilesClassLoaded{%
  \title{Supplemental Appendix\\Conditional Disclosure as a Coordination Device}
  \shortTitle{Supplemental Appendix: Conditional Disclosure}
  \issueName{}
  \ifanonymizeforpeerreview
    \author{Anonymous}
  \else
    \author{Matthew Cashman\thanks{Cashman: MIT Sloan School of Management, cashman@mit.edu.}}
  \fi
  \date{}
  \begin{abstract}
  \end{abstract}
  \maketitle
}{ }

\startsupplement

\ifSubfilesClassLoaded{ }{%
  \appendixparttitle{Supplemental Appendix}
}

Two questions connect the model to a working contract. What changes when one person's endorsement can trigger publication? And what must the administrator do to deliver the promised protection? The finite model answers the first question under explicit conditions on how agents respond to their signals. The privacy distinctions, trust and recantation results, and implementation design address the second. The continuum selection proof and common-rank accounting result remain in the article.

\section{Appendix B. Finite Vanguard Benchmark}\label{app:finite_vanguard}

The article uses a continuum of marginal agents, so no single marginal agent can change the publication outcome. This section instead studies a fixed eligible population $N$ with $n$ marginal agents. Each person contributes one full endorsement and can be pivotal: her endorsement can move the count across the publication threshold. Can a simple cutoff rule still describe behavior when each agent may determine publication but does not know how many others will sign?

Let $S\subseteq N$ be the supporters willing to be published when the resulting coalition protects them. At each state, the model tracks $n$ marginal supporters and a vanguard. The state $\theta\in[\underline\theta,\overline\theta]$ measures how favorable the environment is to participation. Before receiving private information, agents share a prior density $g$. This density is positive and continuous throughout the state space and describes which states the agents initially consider more likely. The integer publication threshold satisfies $K\ge2$. A weakly increasing measurable function
\begin{equation}
b:[\underline\theta,\overline\theta]\longrightarrow\{0,1,\ldots,K\}
\end{equation}
records the number of vanguard endorsements in each state, with $n+b(\theta)\le |S|$. Thus $b(\theta)$ is a count, not a probability. Vanguard members sign regardless of what the marginal agents do. As in the article's simultaneous benchmark, they need not be publicly visible during collection; their endorsements are committed before the marginal choices resolve. The companion paper's sequential benchmark instead defines the vanguard as support already public before recruitment. Because $b$ is weakly increasing, a more favorable state never has fewer vanguard signers.

Everyone knows $b$ and the prior, but no marginal agent observes the realized state or the resulting vanguard count $b(\theta)$ before acting. During collection, each marginal agent sees only her own signal, not other agents' signals or decisions or an interim endorsement count. The administrator holds all received endorsements as a private signer set $V\subseteq S$. The threshold release map is
\begin{equation}
D_K(V)=
\begin{cases}
V, & |V|\ge K,\\
\varnothing, & |V|<K.
\end{cases}
\end{equation}
On success, the released signer set becomes the public coalition $C=D_K(V)$.

Agent $i$ observes
\begin{equation}
x_i=\theta+\sigma\eta_i,
\end{equation}
where $\sigma>0$ controls how noisy the signal is. The shocks $\eta_i$ are independent across agents, independent of $\theta$, and identically distributed with continuous strictly increasing distribution $F_\eta$ and positive density $f_\eta$. The signal structure satisfies monotone likelihood ratios, so a higher signal makes higher states relatively more likely. If agent $i$ signs and $Q_{-i}$ other marginal agents sign, the publication condition is
\begin{equation}
b(\theta)+1+Q_{-i}\ge K.
\end{equation}
If the signer set is published, a signer receives private expressive value $e(x_i)\ge0$ from being named in the coalition. The function $e$ is continuous and strictly increasing: a more favorable signal makes public endorsement more valuable. A nonsigner receives no payoff when someone else's coalition is published. This normalization isolates the private return to joining. Giving nonsigners a public benefit would add a separate incentive to cast the pivotal endorsement. Signing costs $k>0$ whether the contract succeeds or fails. Under ideal failure privacy, failure creates neither expressive value nor exposure costs. A signer's payoff is therefore
\begin{equation}
u_i^{\mathrm{sign}}(Q_{-i},\theta;x_i,k)
=\ind\{b(\theta)+1+Q_{-i}\ge K\}e(x_i)-k,
\end{equation}
whereas not signing yields zero. In short, the signer always pays $k$ and receives $e(x_i)$ only after publication.

Under a symmetric cutoff $c$, every marginal agent signs exactly when her signal is at least $c$; by convention, an agent at the cutoff signs. Allowing $c\in\overline{\mathbb R}=\mathbb R\cup\{-\infty,+\infty\}$ includes ``always sign'' and ``never sign.'' If all other marginal agents use cutoff $c$, each of them signs in state $\theta$ with probability
\begin{equation}
p_c(\theta)
=1-F_\eta\!\left(\frac{c-\theta}{\sigma}\right).
\end{equation}
Use the conventions $p_{-\infty}=1$ and $p_{+\infty}=0$.
Conditional on $\theta$, the other $n-1$ agents make independent decisions with this signing probability. Their total is therefore $Q_{-i}(c)\sim\operatorname{Binomial}(n-1,p_c(\theta))$. After observing signal $x$, agent $i$ updates her belief about the state to the posterior density
\begin{equation}
g(\theta\mid x)
=\frac{g(\theta)f_\eta((x-\theta)/\sigma)}
{\int_{\underline\theta}^{\overline\theta}g(t)f_\eta((x-t)/\sigma)\,dt}.
\end{equation}
For a given state, let $n_{\mathrm{req}}(\theta)=\max\{0,K-b(\theta)-1\}$ be the smallest number of other marginal endorsements needed for publication after agent $i$ signs. The probability that the other agents supply at least this many endorsements is the upper tail of the binomial distribution:
\begin{equation}
\begin{aligned}
\omega_{\mathrm{pub}}(\theta;c)
&=\sum_{j=n_{\mathrm{req}}(\theta)}^{n-1}
\binom{n-1}{j}p_c(\theta)^j\{1-p_c(\theta)\}^{n-1-j},\\
n_{\mathrm{req}}(\theta)&=\max\{0,K-b(\theta)-1\},
\end{aligned}
\end{equation}
where the sum equals zero if $n_{\mathrm{req}}(\theta)>n-1$. Agent $i$ does not know $\theta$, so she averages this state-specific probability using her posterior belief. Her probability of success after observing $x$ is
\begin{equation}
p_i^{\mathrm{pub}}(x;c)
=\int_{\underline\theta}^{\overline\theta}
\omega_{\mathrm{pub}}(\theta;c)g(\theta\mid x)\,d\theta.
\end{equation}
The function $\omega_{\mathrm{pub}}(\theta;c)$ gives the chance of publication if the true state is $\theta$ and agent $i$ signs. The function $p_i^{\mathrm{pub}}(x;c)$ averages that chance over the states still possible after signal $x$; it is the finite model's counterpart of an attempt's success probability $p_i$ in the companion paper. Both vanguard participation $b(\theta)$ and each other marginal agent's signing probability $p_c(\theta)$ rise with the state, so $\omega_{\mathrm{pub}}(\theta;c)$ is weakly increasing in $\theta$. Monotone likelihood ratios make higher states more likely after a higher signal. Thus $p_i^{\mathrm{pub}}(x;c)$ is weakly increasing in $x$, and a signer's expected payoff is
\begin{equation}
U_i(x;c,k)=p_i^{\mathrm{pub}}(x;c)e(x)-k.
\end{equation}
Expected payoff equals the probability of publication times the value of publication, minus the signing cost.

The next assumption ensures that best responses take the form of cutoffs and that the symmetric cutoff is unique. It requires low signals at which not signing is always better, high signals at which signing is always better, and one point at which an agent is indifferent when everyone uses the same cutoff.

\begin{assumption}[Finite symmetric-cutoff regularity]\label{ass:finite_vanguard_selection}
There are nonempty intervals of low states for which $b(\theta)+n<K$ and high states for which $b(\theta)\ge K$. The first condition implies $n<K$. Thus, the marginal fringe cannot publish without some vanguard participation, while the vanguard can publish by itself in favorable states. The success probability $p_i^{\mathrm{pub}}(x;c)$ is jointly continuous in $(x,c)$. For every cutoff $c\in\overline{\mathbb R}$, $U_i(\cdot;c,k)$ is continuous and strictly increasing on the relevant signal interval. Uniformly over $c\in\overline{\mathbb R}$ and $k$ in the compact interval $\mathcal K\subset(0,\infty)$, there are signals $x_L<x_H$ such that $U_i(x_L;c,k)<0<U_i(x_H;c,k)$. For every $k\in\mathcal K$, the diagonal payoff
\begin{equation}
\Phi_k(c)\equiv U_i(c;c,k)
\end{equation}
is continuous, has a unique root $c^*(k)$, and satisfies $\Phi_k(c)<0$ for $c<c^*(k)$ and $\Phi_k(c)>0$ for $c>c^*(k)$.
\end{assumption}

The prior and signal structure determine $p_i^{\mathrm{pub}}$ and ensure that a better signal cannot lower the chance of success. The rest of Assumption~\ref{ass:finite_vanguard_selection} restricts how an agent's preferred action changes as both her signal and her forecast of others' decisions change. The diagonal payoff $\Phi_k(c)=U_i(c;c,k)$ is the payoff of an agent with signal $c$ who expects every other marginal agent to use cutoff $c$. At a root of $\Phi_k$, the agent at the cutoff is indifferent, agents above it sign, and agents below it do not. The assumption requires exactly one such root and rules out repeated changes in the preferred action along the diagonal. The prior and vanguard do not guarantee this unique crossing; it is a maintained regularity condition, like the diagonal-crossing restrictions used in monotone global-games selection arguments \citep{frankelEquilibriumSelectionGlobal2003}.

\begin{proposition}[Finite cutoff characterization]\label{prop:finite_vanguard_selection}
Under Assumption~\ref{ass:finite_vanguard_selection}, the finite marginal-agent game has a unique equilibrium cutoff among symmetric monotone cutoff strategies. Under the stated boundary convention, marginal agent $i$ signs if and only if $x_i\ge c^*(k)$. Without that convention, cutoff strategies can differ only in the action assigned to the indifferent, zero-probability boundary signal. If $k_2>k_1$ are in $\mathcal K$, then $c^*(k_2)>c^*(k_1)$. Consequently, for every state $\theta$,
\begin{equation}
\begin{aligned}
&\Prob\!\left(b(\theta)+Q(c^*(k_2))\ge K\mid\theta\right)\\
&\qquad\le
\Prob\!\left(b(\theta)+Q(c^*(k_1))\ge K\mid\theta\right),
\end{aligned}
\end{equation}
where $Q(c)$ is the total number of marginal agents whose signals exceed $c$.
\end{proposition}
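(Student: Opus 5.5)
The argument has three steps. First we characterize symmetric cutoff equilibria through the diagonal payoff $\Phi_k$. Next we show the root moves monotonically in $k$. Finally we pass that monotonicity through to publication probabilities state by state.

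\emph{Characterization.} Suppose all other marginal agents use a cutoff $c$. By Assumption~\ref{ass:finite_vanguard_selection}, $U_i(\cdot;c,k)$ is continuous and strictly increasing on the relevant signal interval. It is also negative at $x_L$ and positive at $x_H$. So the best response is itself a cutoff, namely the unique zero $\beta(c)$ of $U_i(\cdot;c,k)$, with the action at that zero-probability signal left free. A symmetric cutoff profile is an equilibrium exactly when $\beta(c)=c$. Because $U_i(\cdot;c,k)$ is strictly increasing in $x$, this fixed-point condition is equivalent to $U_i(c;c,k)=0$, that is, $\Phi_k(c)=0$.

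The infinite cutoffs are ruled out by uniform dominance. Under $c=+\infty$, an agent with signal $x_H$ strictly prefers to sign, so never signing is not a best response. Symmetrically, always signing is not a best response at signal $x_L$. The unique root in the assumption therefore gives the unique symmetric equilibrium cutoff $c^*(k)$. Under the stated boundary convention the agent at the cutoff signs, so agent $i$ signs if and only if $x_i\ge c^*(k)$. Without the convention, two cutoff strategies can differ only on the single indifferent signal, which has probability zero.

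\emph{Comparative statics in $k$.} Take $k_1<k_2$ in $\mathcal K$. Since $\Phi_{k}(c)=p_i^{\mathrm{pub}}(c;c)e(c)-k$, we have $\Phi_{k_2}(c)=\Phi_{k_1}(c)-(k_2-k_1)$. Evaluating at $c_1=c^*(k_1)$ gives $\Phi_{k_2}(c_1)=-(k_2-k_1)<0$. The sign pattern in Assumption~\ref{ass:finite_vanguard_selection} says $\Phi_{k_2}$ is negative exactly below its root. Hence $c_1<c^*(k_2)$, which is the strict inequality $c^*(k_2)>c^*(k_1)$. This step is short because $k$ enters additively and the unique sign change is assumed.

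\emph{Publication probabilities.} Fix a state $\theta$. Conditional on $\theta$, the total count is $Q(c)\sim\operatorname{Binomial}(n,p_c(\theta))$, where $p_c(\theta)=1-F_\eta((c-\theta)/\sigma)$ is weakly decreasing in $c$. Raising $c$ from $c^*(k_1)$ to $c^*(k_2)$ therefore lowers the success probability weakly.

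Binomial laws are stochastically ordered in $p$, so the upper tail $\Prob(Q\ge K-b(\theta))$ is weakly increasing in $p$. To see this directly, couple the two counts by setting $Q(c)=\#\{j:\eta_j\le(\theta-c)/\sigma\}$ with the same shocks for both cutoffs. Under this coupling $Q(c^*(k_2))\le Q(c^*(k_1))$ pointwise, which gives the displayed inequality for every $\theta$.

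The inequality is only weak, and it can hold with equality. This happens when $b(\theta)\ge K$, where both probabilities equal one, and when $b(\theta)+n<K$, where both equal zero. That matches the statement.

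The main obstacle is justifying that fixed points of the best-response map coincide exactly with roots of the diagonal and that boundary cutoffs are excluded. This requires strict monotonicity of $U_i(\cdot;c,k)$ in $x$ together with the uniform dominance signals; everything else is immediate.
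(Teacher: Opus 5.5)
Your proposal is correct and follows the paper's proof essentially step for step. Best responses are finite cutoffs, symmetric equilibria are roots of the diagonal $\Phi_k$, and the additive shift $\Phi_{k_2}(c_1)=-(k_2-k_1)<0$ together with the assumed sign pattern gives $c^*(k_2)>c^*(k_1)$; a pathwise comparison of signer counts then yields the publication inequality.

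One small slip: the coupling should be $Q(c)=\#\{j:\eta_j\ge(c-\theta)/\sigma\}$, the number of realized signals $x_j=\theta+\sigma\eta_j$ at or above $c$. Your set $\{j:\eta_j\le(\theta-c)/\sigma\}$ has success probability $F_\eta((\theta-c)/\sigma)$, which equals $p_c(\theta)$ only when the noise is symmetric. Once corrected, the monotone-in-$c$ argument goes through unchanged.
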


Within symmetric monotone strategies, the proposition identifies one equilibrium signal cutoff. A higher signing cost raises this cutoff: agents need stronger private information before signing, so fewer of them sign and publication cannot become more likely in any state. This finite result describes participation and publication within the contract. It neither compares the contract with simultaneous open expression nor models how coalition size changes a signer's exposure, so it does not extend the article's exposure-tail result to a finite population.

\begin{proof}
Fix a conjectured cutoff $c$. Continuity, strict monotonicity, and the uniform low- and high-signal inequalities imply that the best response is also a finite cutoff. The agent at that cutoff is indifferent. In a symmetric monotone cutoff equilibrium, the conjectured cutoff must equal the best-response cutoff, so the signal at the cutoff satisfies $U_i(c;c,k)=0$. Assumption~\ref{ass:finite_vanguard_selection} gives the unique solution $c^*(k)$, and the boundary convention tells the indifferent agent to sign. This establishes uniqueness only among symmetric monotone cutoff equilibria. The assumption imposes the unique diagonal crossing; the prior and signal structure alone do not produce it.

Set $c_1=c^*(k_1)$. Since $k_2>k_1$,
\begin{equation}
\Phi_{k_2}(c_1)=\Phi_{k_1}(c_1)-(k_2-k_1)<0.
\end{equation}
The assumed sign pattern for $\Phi_{k_2}$ then implies $c^*(k_2)>c_1$. Raising the cutoff weakly reduces the number of marginal signers for every realized set of private signals. Thus $b(\theta)+Q$ falls weakly \emph{pathwise}: it falls realization by realization, not just on average. Raising the cutoff cannot turn a failed realization into a success, which proves the publication-probability inequality in every state.
\end{proof}

\section{Appendix C. Failure Privacy, Attempt Observability, and Exchangeability}\label{app:privacy_exchangeability}

Failure privacy makes one promise: if the publication condition is never met, no signer's identity is revealed. It does not by itself hide the failed count or the fact that a campaign was attempted. \emph{Attempt privacy} hides the campaign and its failure from the outside audience that can punish signers. A mechanism can therefore provide failure privacy while making the failed attempt public. The ideal benchmark adds attempt privacy and withholds the failed count. If $V$ is the private signer set, the audience then observes
\begin{equation}
o=D_\tau(V)=
\begin{cases}
V, & m(V)\ge\tau,\\
\varnothing, & m(V)<\tau.
\end{cases}
\end{equation}
On success, the released signer set becomes the public coalition $C=D_\tau(V)$. On failure, the symbol $\varnothing$ is a public non-event: the audience sees no names, count, failure notice, or other evidence of an attempted campaign. Eligible agents may know that they received an invitation and may infer after the deadline that the contract failed. This private inference does not identify other signers or create a public signal for the audience that can punish them.

If the campaign is public, the absence of a published list reveals a failed attempt rather than the non-event $\varnothing$. Names remain private, but the audience can update its beliefs after learning that a campaign fell short. This is a different information regime. A breach or premature release is different again: it reveals information that the mechanism promised to hide.

The article assumes that coalitions of the same size impose the same exposure cost at a fixed audience-belief index. The following inference calculation adds \emph{informational exchangeability}: conditional on the state, coalitions of the same size are equally likely to form. Seeing Alice, Bob, and Carol instead of three other eligible people then conveys no information about the state beyond coalition mass. Equal exposure costs alone do not imply this additional restriction on what names reveal.

\begin{lemma}[Exchangeable identities]\label{lem:exchangeable_identities}
Suppose that, conditional on the state and the symmetric strategy, all coalitions of the same mass are equally likely. Suppose also that identities have no payoff-relevant public attributes beyond coalition membership. If publication reveals a named coalition $C$, then coalition mass contains all the information needed for public inference:
\begin{equation}
\Prob(\theta\mid C,\mathrm{publication})
=
\Prob(\theta\mid m(C),\mathrm{publication}).
\end{equation}
If $\mu(y)$ denotes the resulting audience belief at coalition size $y$, signer exposure costs can therefore be represented as
\begin{equation}
c_i(C)=\alpha_i\ell\bigl(m(C),\mu(m(C))\bigr).
\end{equation}
\end{lemma}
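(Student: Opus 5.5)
The argument is a direct Bayes computation: show that the likelihood of a published coalition depends on its identity only through its mass. Fix the symmetric strategy and consider two coalitions $C$ and $C'$ with $m(C)=m(C')=y$, both at or above the threshold so that publication occurs. By the exchangeability hypothesis,
\[
\Prob(C\mid\theta)=\Prob(C'\mid\theta)\quad\text{for every }\theta .
\]
Hence $\Prob(C\mid\theta)=\Prob(m(C)=y\mid\theta)/N_y$, where $N_y$ is the number of coalitions of mass $y$. In the continuum case, I would state the same relation in terms of a uniform conditional law on coalitions of mass $y$, that is, a regular conditional distribution given mass.

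Substitute this into Bayes' rule, with posterior proportional to $g(\theta)\Prob(C,\mathrm{publication}\mid\theta)$. Publication is the event $m(V)\ge\tau$, which is a function of mass alone, so conditioning on it only restricts attention to coalitions with $y\ge\tau$. The factor $1/N_y$ does not depend on $\theta$ and cancels between numerator and normalizing constant. What remains is
\[
\Prob(\theta\mid C,\mathrm{publication})\propto g(\theta)\,\Prob(m(C)=y,\ \mathrm{publication}\mid\theta),
\]
which is exactly the posterior given mass. Equivalently, $m(C)$ is a sufficient statistic for $\theta$ in the family of laws of the published $C$: the factorization theorem applies because the conditional law of $C$ given its mass does not depend on $\theta$.

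The second hypothesis, that identities carry no payoff-relevant public attributes, ensures there is no other channel through which the audience's response could depend on who is named. The audience belief is therefore a function of mass, $\mu(m(C))$. Combined with the article's maintained exchangeable-identities cost technology, in which $\ell$ depends on the coalition only through its mass and the index, this gives $c_i(C)=\alpha_i\ell(m(C),\mu(m(C)))$.

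The main obstacle is making the measure-theoretic version rigorous in the continuum, where individual coalitions have probability zero and ``equally likely'' has to mean a $\theta$-invariant conditional distribution given mass. I would first prove the finite case cleanly. I would then state the continuum version as an assumption on the disintegration and apply the factorization theorem, leaving the finite case as the primary argument.
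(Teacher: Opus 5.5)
Your proposal is correct and follows the same route as the paper. Exchangeability makes the likelihood of a published coalition depend on $\theta$ only through $m(C)$, your $\theta$-free factor $1/N_y$ cancels in Bayes' rule, and the cost representation then follows because exposure and audience belief depend only on mass; your remarks on the continuum disintegration go beyond the paper's proof, which stays at the level of your finite-case argument.
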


\begin{proof}
Under exchangeability, the likelihood of a particular published coalition $C$ depends on the state only through its mass $m(C)$. Bayes' rule therefore gives the same posterior belief about the state for every coalition of that mass. The cost representation follows because the audience responds only to coalition mass and the belief that mass induces.
\end{proof}

Publication also conveys information. Under a proper prior with finite mean, a nondegenerate upper-state publication event raises the expected state. When the state measures support, publication is good news about the statement. This publication-driven updating is distinct from direct safety in numbers at fixed $\mu$, where a larger coalition reduces exposure through $\ell(y,\mu)$ while audience beliefs are held fixed.

Lemma~\ref{lem:exchangeable_identities} identifies when coalition size contains all the information in the names on a published list. The publication-driven updating channel is outside the article's fixed-$\mu$ selection theorem. If belief is instead $\mu(y)$, effective exposure is $\ell(y,\mu(y))$, and coalition size affects both protection and inference. The theorem then requires the combined function to be bounded, continuous, and strictly decreasing, as well as the original dominance and single-crossing conditions. Exchangeability does not imply these properties. The appendix therefore makes no general selection claim when audience beliefs vary with coalition size.

Names still determine who becomes publicly associated with the statement and who bears exposure. Exchangeability removes only the effect of replacing one signer with another while holding coalition size fixed. If signer attributes change public beliefs, protection, or retaliation, coalition composition becomes another payoff-relevant state variable.

\section{Appendix D. Trust and Recantation}\label{app:trust_recantation}

The mechanism can fail in two different ways. First, the advertised contract may be false: a supposed administrator may collect names to leak them rather than follow the publication rule. That risk changes a prospective signer's expected payoff before she acts. Second, some eligible people may endorse the statement in bad faith and repudiate it as soon as the list becomes public, reducing the coalition that remains after release. The first problem calls for a trust threshold for signing; the second calls for a robust publication margin.

\subsection{A False Contract and Signer Trust}\label{app:false_contract}

Let $\delta_F\in[0,1)$ be the common probability that the apparent contract is false. With probability $1-\delta_F$, the administrator is honest and follows the announced threshold rule. A false administrator gives each signer's identity to the opponent, never creates a protecting public coalition, and produces no expressive benefit. This is a conservative breach model: the opponent can target each signer, but the leaked names do not become a common public commitment. At state $s$, write
\begin{equation}
h_b(s)=\alpha\ell\bigl(b(s),\mu\bigr)
\end{equation}
for the resulting harm. Equation~\eqref{eq:false_contract_payoff} gives the signer's expected payoff.

For any state $s$, integration over ranks gives
\begin{equation}
R_{SAC}^{F}(s)
=(1-\delta_F)R_{SAC}(s)
-\delta_F\bigl(h_b(s)+k\bigr).
\label{eq:false_contract_rank_relation}
\end{equation}
The baseline payoff $R_{SAC}$ already subtracts $k$. Multiplying it by $1-\delta_F$ would leave out $\delta_F k$ of the signing cost. The identity restores that cost because a signer pays it under either administrator type.

\begin{assumption}[False-contract selection domain]\label{ass:false_contract_selection}
Fix $\delta_F\in[0,1)$. Retain the exact-aggregation, prior, signal, compactness, $\tau$-protection, low-state exposure, genuine-failure-risk, vanguard-clearance, and strict-friction clauses of Assumption~\ref{ass:operative_domain}. Replace vanguard protectability with the risk-adjusted high-state condition
\begin{equation}
(1-\delta_F)e>h_b(\theta_H)+k.
\label{eq:false_contract_high_dominance}
\end{equation}
The payoff in equation~\eqref{eq:false_contract_payoff} has uniform low- and high-signal dominance regions on the compact selection interval, and $R_{SAC}^{F}$ is continuous and strictly single crossing there, with a unique zero $\theta_F^*$.
\end{assumption}

To interpret the high-state condition, suppose the vanguard clears the threshold and no marginal agent joins. An honest administrator yields $e-h_b(\theta_H)$, a false administrator yields $-h_b(\theta_H)$, and signing always costs $k$. Expected payoff is therefore $(1-\delta_F)e-h_b(\theta_H)-k$. At a low state where even full participation cannot reach the threshold, expected payoff is $-\delta_F h_b(\theta_L)-k<0$. Under the article's signal-concentration requirements, these strict state-level inequalities give agents sufficiently favorable signals at which signing dominates and sufficiently unfavorable signals at which not signing dominates.

The baseline's unique crossing does not guarantee the crossing clause here: equation~\eqref{eq:false_contract_rank_relation} changes the state-dependent rank payoff. A sufficient condition is that the baseline $R_{SAC}$ be strictly increasing wherever it is nonnegative on the compact selection interval $\mathcal{J}$. This allows the guaranteed-failure region, where $R_{SAC}=-k$. The baseline payoff is weakly increasing throughout $\mathcal{J}$, and $h_b(s)$ is weakly decreasing because the vanguard grows and exposure falls with coalition size. Wherever $R_{SAC}<0$, the false-contract payoff is also negative. On the remaining interval, equation~\eqref{eq:false_contract_rank_relation} is strictly increasing because $1-\delta_F>0$. Continuity and the low- and high-state inequalities therefore give it exactly one zero.

Under Assumption~\ref{ass:false_contract_selection}, the proof of Lemma~\ref{lem:global_selection} applies without another symmetry or cutoff restriction. Below the publication threshold, payoff is constant in marginal participation. At the threshold it jumps by $(1-\delta_F)(e-\alpha\ell(\tau,\mu))\ge0$, and above the threshold it rises as exposure falls. The false-contract arm therefore retains strategic complementarity.

\begin{corollary}[False-contract noise-limit selection]\label{cor:false_contract_selection}
Under Assumption~\ref{ass:false_contract_selection}, the lower and upper signal boundaries of the strategies surviving iterated deletion for the false-contract arm both converge to $\theta_F^*$ as $\sigma\downarrow0$. For every $\varepsilon>0$, sufficiently small noise forces participation at every signal above $\theta_F^*+\varepsilon$ and nonparticipation at every signal below $\theta_F^*-\varepsilon$. At the fixed interior evaluation state, the corresponding conditional expressive-benefit requirement across one-type economies is $e_{SAC}^{sel,F}(\theta)$ in Proposition~\ref{prop:trust_recantation}.
\end{corollary}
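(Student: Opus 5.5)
The plan is to reduce Corollary~\ref{cor:false_contract_selection} to Lemma~\ref{lem:global_selection} by checking that the false-contract payoff in equation~\eqref{eq:false_contract_payoff} meets every structural input the proof of that lemma uses. The two ingredients that do the work there are these. First, bounded payoffs that are nondecreasing in both the state and marginal participation. Second, uniform dominance regions at the extreme signals. Add to these continuity of the cutoff-conjecture payoff $\Psi_\sigma(x;c)$ away from a single moving publication rank, and a rank integral with a unique zero inside the compact interval $\mathcal{J}$. Once these are verified with $u^F_{SAC}$ in place of $u_{SAC}$, the pointwise deletion protocol and its cutoff bounds carry over verbatim. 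These are the transfinite rounds, the sandwiching of survivors between $\mathbf 1\{x>\bar c_n\}$ and $\mathbf 1\{x\ge\underline c_n\}$, and the endpoint-survival induction. None of them used the specific form of the payoff.

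I will check the ingredients in order. For monotonicity in participation, below $\tau$ the payoff is the constant $-\delta_F h_b(s)-k$. At $\tau$ it jumps by $(1-\delta_F)(e-\alpha\ell(\tau,\mu))\ge0$ by $\tau$-protection. Above $\tau$ it rises because $\ell$ falls in coalition mass.

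For monotonicity in the state, $b$ is weakly increasing, so $h_b(s)$ is weakly decreasing. The success indicator and the success payoff are weakly increasing in $s$ through $y(w;s)$, so the payoff is nondecreasing in $s$.

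Boundedness follows from boundedness of $\ell$.

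Dominance is supplied by the clauses of Assumption~\ref{ass:false_contract_selection}. At $\theta_L$ the payoff is $-\delta_F h_b(\theta_L)-k<0$ at every rank, and at $\theta_H$ it is at least $(1-\delta_F)e-h_b(\theta_H)-k>0$ by equation~\eqref{eq:false_contract_high_dominance}. With the retained signal-concentration clauses, these give signals $x_L<x_H$ in $\mathcal{J}$ where each action is strictly dominated against all remaining profiles.

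Continuity of $\Psi_\sigma$ follows by the same dominated-convergence argument as before. The only discontinuity is the threshold jump, which occurs at at most one error realization. The added term $-\delta_F h_b(x-\sigma\eta)$ is continuous because $b$ and $\ell$ are continuous.

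The final step is the noise-limit rank evaluation. At the boundary signal $c_\sigma$, the probability-integral transform under the diffuse kernel, or the positive-density argument under the proper prior, gives $\Psi_\sigma(c_\sigma;c_\sigma)\to\int_0^1u^F_{SAC}(c,w)\,dw=R^F_{SAC}(c)$. The single jump rank does not affect the integral. Compactness of $\mathcal{J}$ and the unique zero $\theta_F^*$ assumed in Assumption~\ref{ass:false_contract_selection} then force $\underline c_\sigma,\bar c_\sigma\to\theta_F^*$, which yields the $\varepsilon$-statement. For the last sentence, I integrate equation~\eqref{eq:false_contract_payoff} over ranks at the evaluation state. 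This gives the relation~\eqref{eq:false_contract_rank_relation}, which is $(1-\delta_F)q_\tau e-(1-\delta_F)\alpha\int_{w_\tau}^1\ell\,dw-\delta_F h_b-k$. Solving for the zero gives $\bar e^F_{SAC}$ in equation~\eqref{eq:false_contract_rank_value}. Combining that value with the $\tau$-protection clause, exactly as in the proof of Theorem~\ref{thm:selection_gap}, gives $e^{sel,F}_{SAC}=\max\{\bar e^F_{SAC},\alpha\ell(\tau,\mu)\}$.

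I expect the main obstacle to be the joint state-monotonicity and continuity check with the new leak term. The term $-\delta_F h_b(s)$ varies with the state even on failed ranks, so I must confirm it moves in the right direction, namely weakly upward as $b$ grows. I must also confirm it does not break the one-point discontinuity argument. Everything else is inherited from Lemma~\ref{lem:global_selection}.
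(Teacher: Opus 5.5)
Your proposal is correct and follows the same route as the paper. The paper's proof also says that the argument of Lemma~\ref{lem:global_selection} applies to any arm meeting its conditions. It takes the prior, signal, aggregation, compactness, uniform-dominance and unique-crossing inputs from Assumption~\ref{ass:false_contract_selection}, gets strategic complementarity from the nonnegative jump $(1-\delta_F)(e-\alpha\ell(\tau,\mu))$ at the threshold, and reads off $e_{SAC}^{sel,F}$ as the larger of the rank-indifference value and the threshold-protection value. Your explicit checks of the leak term's state monotonicity, the continuity of $\Psi_\sigma$, and the rank-integral computation spell out steps the paper leaves implicit.
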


\begin{proof}
The proof of Lemma~\ref{lem:global_selection} applies to any arm that satisfies its conditions. Assumption~\ref{ass:false_contract_selection} supplies the required prior, signal, aggregation, compactness, uniform-dominance, and unique-crossing conditions for $u_{SAC}^{F}$. The preceding paragraph verifies strategic complementarity. Substituting $u_{SAC}^{F}$ and $R_{SAC}^{F}$ into that proof gives convergence to the unique zero $\theta_F^*$. At the evaluation state, the required expressive benefit is the larger of the rank-indifference value and the value that protects signers at the publication threshold. The result is asymptotic: at positive noise the boundaries of the surviving strategies need not coincide, and exact rank indifference does not prescribe a unique action.
\end{proof}

\begin{proof}[Proof of Proposition~\getrefnumber{prop:trust_recantation}, part 1]
Fix the article's interior evaluation state, abbreviate $q_\tau=1-w_\tau$, and let
\begin{equation}
A_\tau
=\alpha\int_{w_\tau}^{1}\ell\bigl(y(w;\theta),\mu\bigr)\,dw.
\end{equation}
The false-contract rank payoff is
\begin{equation}
(1-\delta_F)(q_\tau e-A_\tau)-\delta_F h_b(\theta)-k.
\end{equation}
Setting it equal to zero gives
\begin{equation}
\bar e_{SAC}^{F}(\theta)
=\frac{A_\tau}{q_\tau}
+\frac{\delta_F h_b(\theta)+k}{(1-\delta_F)q_\tau}.
\end{equation}
Because $\bar e_{SAC}=A_\tau/q_\tau+k/q_\tau$, subtraction yields equation~\eqref{eq:false_contract_rank_value}. Hence $\bar e_{SAC}^{F}=\bar e_{SAC}$ at $\delta_F=0$, and
\begin{equation}
\frac{\partial\bar e_{SAC}^{F}}{\partial\delta_F}
=\frac{h_b(\theta)+k}{q_\tau(1-\delta_F)^2}>0.
\end{equation}
The protection condition is unchanged under an honest administrator: publication at the threshold must satisfy $e\ge\alpha\ell(\tau,\mu)$. Rank indifference and protection therefore give the maximum in the proposition. Corollary~\ref{cor:false_contract_selection} translates this rank requirement into conditional noise-limit selection across the one-type economies.

Now fix a protected type with $e>\bar e_{SAC}$ and define $D=q_\tau(e-\bar e_{SAC})>0$. The condition $e\ge\bar e_{SAC}^{F}$ is equivalent to
\begin{equation}
(1-\delta_F)D\ge\delta_F\bigl(h_b(\theta)+k\bigr).
\end{equation}
Rearranging gives
\begin{equation}
1-\delta_F\ge
\frac{h_b(\theta)+k}{h_b(\theta)+k+D}
=t_{sign}^{*}(\theta;e).
\end{equation}
The inequality is strict for strictly positive selected rank payoff and becomes equality at indifference.

For the institutional comparison, assume $G^{acct}>0$ and $\bar e_{open}>\alpha\ell(\tau,\mu)$, so protection does not itself eliminate a positive selected gap. The gap is positive precisely when $\bar e_{SAC}^{F}<\bar e_{open}$, or
\begin{equation}
G^{acct}>
\frac{\delta_F\bigl(h_b(\theta)+k\bigr)}{(1-\delta_F)q_\tau}.
\end{equation}
Rearrangement gives equation~\eqref{eq:advantage_trust_threshold}. Holding $X$ and the other terms fixed, write either threshold as $(h_b+k)/(h_b+k+X)$ with $X>0$. Its partial derivative with respect to $h_b$ is $X/(h_b+k+X)^2>0$. Greater leak harm therefore raises minimum trust in this comparison. The two thresholds generally differ because $t_{sign}^{*}$ asks whether one fixed type participates, while $t_{adv}^{*}$ asks whether the institution retains an advantage over open expression.
\end{proof}

If a leaked signer list itself creates some protection, replace $h_b(\theta)$ in the rank calculation by the exposure harm under a false administrator, averaged over ranks. That average is no greater than $h_b(\theta)$ when the full harvested list is visible and safety in numbers holds. The conservative convention of protection from the vanguard alone therefore requires weakly more trust.

\subsection{Publication Against Bounded Recantation}\label{app:bounded_recantation}

Authentication proves that an endorsement came from one eligible person and applies to the announced statement. It does not prove continuing support. This extension therefore relaxes the baseline restriction $V\subseteq S$ and permits an authenticated signer set $V\subseteq N$ to include bad-faith endorsers. The calculation is nonatomic: authenticated endorsements can attain any mass up to the maximum capacity $\widehat Y$. The uncertainty set permits every recanting subset up to the stated absolute or proportional bound, including a subset that attains the bound whenever enough mass is available.

The administrator first releases the historical signer list $V$. Any recantations then become public before the opponent responds. If $R\subseteq V$ is the recanting set, the coalition that continues to stand behind the statement is $C^{stay}=V\setminus R$. A continuing signer's exposure is $\alpha\ell(m(C^{stay}),\mu)$. Thus a recanter remains visible on the historical list but does not supply protection to those who continue to endorse it. For $0<\rho\le1$, a publication rule is robust to the stated bound if every allowed $R$ leaves $m(C^{stay})\ge\rho$.

\begin{proof}[Proof of Proposition~\getrefnumber{prop:trust_recantation}, part 2]
First suppose the mass that can recant is bounded by $\bar z$. If a threshold rule publishes only when $m(V)\ge\rho+\bar z$, then
\begin{equation}
m(C^{stay})
=m(V)-m(R)
\ge m(V)-\bar z
\ge\rho.
\end{equation}
Thus $\tau_{\bar z}=\rho+\bar z$ is robust. It is sharp among size-only threshold rules under the stated uncertainty set. At any attainable release mass $y<\rho+\bar z$, choose an allowed recanting mass $\min\{\bar z,y\}$. The continuing mass is $\max\{0,y-\bar z\}<\rho$. No lower attainable threshold guarantees safety. If $\rho+\bar z>\widehat Y$, no nontrivial size-only rule can both release and provide the guarantee; always withholding remains safe.

For the proportional bound, let $V_0\subseteq S$ be a known vanguard whose endorsements remain committed, with fixed mass $m(V_0)=b_0<\rho$. Suppose recantation is restricted to $V\setminus V_0$ and satisfies
\begin{equation}
m(R)\le\phi\bigl(m(V)-b_0\bigr),
\qquad \phi\in[0,1).
\end{equation}
If the released mass is $y$, the continuing coalition has mass at least
\begin{equation}
b_0+(1-\phi)(y-b_0).
\end{equation}
Setting this expression weakly above $\rho$ and solving for $y$ gives
\begin{equation}
y\ge b_0+\frac{\rho-b_0}{1-\phi}=\tau_\phi.
\end{equation}
Subtracting $\rho$ yields the displayed margin in equation~\eqref{eq:proportional_recantation_margin}. Publishing at any smaller attainable mass permits the maximal allowed fraction to recant and leaves less than $\rho$, so the bound is sharp. If $\tau_\phi>\widehat Y$, only withholding provides the robust guarantee.
\end{proof}

The proportional formula is a fixed precommitted threshold because $b_0$ is a known nonrecanting mass, not the state-dependent vanguard $b(\theta)$. If the designer knows only that $b(s)\ge\underline b$ throughout the operative state interval, the same guarantee follows by setting $b_0=\underline b$. The formula differs from $\rho/(1-\phi)$ because only endorsements beyond the genuine vanguard are discounted. If the same fraction could recant from the entire released coalition, including the vanguard, then $b_0=0$ and $\rho/(1-\phi)$ would be the correct threshold.

This technology is intentionally separate from the article's baseline, in which exposure depends on the mass of the original named coalition $C=D(V)$. If a repudiated name continues to consume the opponent's retaliation capacity, recantation does not reduce protective mass and no extra margin is needed. If current commitment determines protection, a participation model must replace successful-rank exposure by $\ell(y-\bar z,\mu)$ in the absolute worst case or by $\ell(b_0+(1-\phi)(y-b_0),\mu)$ under the proportional bound. That change affects the full rank integral, so the article's selection theorem cannot be imported by changing $\tau$ alone.

The recantation result is a worst-case design guarantee, not an equilibrium model of sabotage. It does not determine how many bad-faith endorsers enter, what an attacker pays, or how the designer learns $\bar z$ or $\phi$. Given a credible bound and protection supplied only by continuing commitment, it identifies the necessary and sufficient publication condition. Screening, penalties, and governance can lower the bound; the margin keeps the continuing coalition safe when those controls are imperfect.

\section{Appendix E. Implementation with Existing Tools}\label{app:implementation}

A fixed-population Social Assurance Contract must do four jobs. It must verify that each eligible person submits at most one endorsement, keep identities and endorsements in private custody during collection, determine whether the publication condition has been met, and jointly publish the complete authenticated signer set after success. Existing systems already perform close versions of all four jobs. A proof-of-concept threshold secret petition accepts encrypted signatures from validated users, prevents duplicate signatures, divides decryption control among trustees, and reveals the signatures after the threshold is reached \citep{breuerSeCritMassThresholdSecret2024}. Secure allegation escrows divide confidential allegations and identities among independent parties, then reveal them after specified matching thresholds are reached; these systems have formal security analyses and working prototypes \citep{arunFindingSafetyNumbers2020}. \citet*{mangipudiCollusionDeterrentThresholdInformation2023} develop a related threshold information escrow that deters trustee collusion. The necessary components therefore have close precedents. Building a production system requires combining them, testing the result for security and usability, and defining clear operating rules.

\subsection{A Feasible Architecture}\label{app:implementation_architecture}

The central design choice is whether to trust one administrator or divide administration among several actors. A single administrator is easier to operate; a distributed architecture limits what any one actor can learn or do.

One independent custodian, such as a law firm, regulator, union, or ombudsperson, can perform all four functions. The custodian checks participants against a roster, protects the list with access controls, and follows audited rules for release and destruction. Professional duties, contractual penalties, independent audits, and reputation can make this arrangement credible. But participants must trust one organization that can inspect, alter, suppress, or leak the list.

A distributed design separates five jobs. The \emph{architect} fixes the contract but never receives submissions. A \emph{registrar} verifies eligibility. A \emph{collection service} checks and counts encrypted submissions. An independent \emph{auditor} resolves disputes over whether a valid submission was included. Finally, $m$ independent \emph{trustees} control decryption. The registrar, collection service, auditor, and trustees together perform the administrator's functions. The trustees jointly create an encryption key for that campaign, but each trustee keeps only one share; no central operator holds the complete key. At least $t$ trustees must cooperate to decrypt the records, where $1<t\le m$. Different existing institutions can fill these roles. Dividing the jobs prevents one actor from possessing both the identities and unilateral power to release them.

Each cryptographic tool solves a different problem. A digital signature is the electronic equivalent of signing a particular document. It proves that the holder of a key endorsed the exact statement and that the statement has not changed. Encryption turns the signed record into unreadable ciphertext. Threshold decryption divides the power to reverse that encryption among several trustees, so no trustee can reveal a record alone. A zero-knowledge proof lets the collector verify a claim about the encrypted record---for example, that it contains an eligible identity and a valid signature---without seeing the identity or signature. Together, these tools provide authenticity, secrecy, shared control over release, and verification without disclosure.

\begin{enumerate}
\item Before recruitment begins, the architect fixes the exact statement, the eligible population $N$, the publication threshold $K$, the closing time, and what the system will disclose after success or failure. The architect also fixes any cancellation and emergency rules. A digital signature records assent to these terms. Encryption hides the signed record during collection. Both are necessary: one proves agreement, while the other provides secrecy.

\item The registrar verifies each member of $N$ once and gives her a cryptographically verifiable eligibility credential. The credential certifies either her identity or a public signing key that only she controls and that the registrar has linked to her identity. People can register before any campaign exists, so registration does not reveal an intention to sign a particular statement. An anonymous credential lets a participant prove eligibility to the collector without revealing her identity during collection. Established systems also support one-use credentials that reveal identity later under specified conditions \citep{camenischEfficientSystemNontransferable2001}.

\item A participant uses her identity-bound key to sign the exact statement and a unique identifier for the contract. She submits an encrypted record containing the signature, her certified identity, and the identifier. She also submits a campaign-specific serial number and a zero-knowledge proof. The proof shows that the encrypted identity or key was certified by the registrar and that the record contains a valid signature on this contract, without revealing the identity or signature. Anonymous-credential and private-contract systems provide the proof techniques needed for these checks \citep{camenischEfficientSystemNontransferable2001,kosbaHawkBlockchainModel2016}. The same credential always produces the same serial number within one campaign but an unrelated number in another campaign. The collector can therefore reject a duplicate without learning the participant's identity or linking her activity across campaigns. Coconut implements this eligibility and duplicate-prevention pattern for electronic petitions \citep{sonninoCoconutThresholdIssuance2019}.

\item The collection service verifies the proof, rejects repeated serial numbers, and records each accepted encrypted submission in a private authenticated log. Authentication makes later deletion or alteration detectable. The service gives the participant a signed inclusion receipt. A participant who receives no receipt, or whose receipt is missing from the count, can challenge the omission privately before an independent auditor. With a receipt, she can contest suppression without publicly identifying herself in a failed campaign.

\item Vanguard members use the same authenticated process, and their accepted endorsements count toward $K$. They participate independently of the marginal participants' choices. As in the finite benchmark, the service reveals neither the interim vanguard count nor the total count to marginal participants. At the deadline, the collector and trustees audit the accepted log. If the count itself must remain secret, secure multiparty computation can test whether the private inputs reach $K$ without giving every input to one operator \citep{evansPragmaticIntroductionSecure2018}.

\item At the announced deadline, each trustee verifies the integrity and count of the same private log. The auditor resolves any timely challenge supported by an inclusion receipt. If the count reaches $K$, at least $t$ trustees sign a success certificate that links the contract identifier, deadline, final count, and a cryptographic fingerprint of the accepted ciphertexts. The fingerprint is a short value used to detect changes to those encrypted records, tying the certificate to the set the trustees approved. The same trustee quorum then combines its decryption-key shares and jointly publishes the complete authenticated signer set as coalition $C=D_K(V)$. Any reader can verify both the registrar's link between identity and key and the participant's signature on the contract.

The publication threshold $K$ and trustee quorum $t$ serve different purposes. The first authorizes publication; the second specifies how many trustees must cooperate to make the records readable.

If the count falls short, the administrator publishes no identities, the collector deletes all identity-bearing records, and the trustees erase the campaign-specific key shares. Failure privacy requires that fewer than $t$ trustees collude before erasure and that no group of $t$ trustees retain enough shares to reconstruct the key later.
\end{enumerate}

This workflow creates no public transaction, log entry, or counter for an individual submission, and it issues no public failure notice. Only success produces a public certificate and named coalition. For a failed campaign to remain the article's ideal public non-event $\varnothing$, recruitment and network traffic must also provide attempt privacy: they must hide the campaign from the audience that can punish signers.

Several established technologies can perform parts of this workflow. Threshold cryptography divides decryption control: no trustee can open a record alone, but a specified quorum can combine its shares. Secure allegation escrows demonstrate distributed conditional release \citep{arunFindingSafetyNumbers2020}, and \citet*{mangipudiCollusionDeterrentThresholdInformation2023} show how to deter trustees from colluding before release is authorized. A smart contract is program code that a blockchain executes and records under rules fixed in advance. Privacy-preserving voting systems show that smart contracts can enforce eligibility, verify proofs, manage stages, and tally votes for a fixed group \citep{mccorrySmartContractBoardroom2017}. More general systems use encryption and zero-knowledge proofs to enforce private contract rules without placing the hidden information in plain text on a public ledger \citep{kosbaHawkBlockchainModel2016}.

\paragraph{Public-counter variant}
A public ledger can make the collection record harder to alter, but that transparency may reveal the campaign and its running count to outsiders.

A public ledger is a record copied across many computers and designed to make past entries hard to alter. It can store encrypted records or cryptographic fingerprints of them. Encryption hides a record's contents; a fingerprint makes changes detectable. A fingerprint does not contain the record itself, so copies must also be stored somewhere to provide \emph{availability}: the ability to retrieve the record when release is authorized. Recording each submission publicly reveals that a campaign exists and may reveal its running count. This protects failed names but does not create the article's stronger public non-event unless fixed-size batches, decoy traffic, or another privacy layer also hides the traffic and count. An immutable ledger that stores identity-bearing ciphertexts also preserves material that a future key compromise could expose. The benchmark-compatible design therefore keeps those ciphertexts off the public ledger. The ledger can instead record the terms fixed in advance and, after success, the certificate and released list. A public-ledger design also depends on the security and availability of its consensus system, the rules through which participating computers agree on the valid state of the ledger.

Separating the roles limits what any one actor can learn or do. The registrar knows who is eligible but not who submitted. The collector verifies that submissions are unique without reading them. No single trustee can open a failed list, and the architect cannot change the statement or threshold after seeing participation. Submission receipts let a private auditor detect omitted entries or different versions of the log shown to different participants. The design still depends on reliable operation. Credentials must be issued correctly or audited; participant devices and software must work as intended; and all trustees must use the same log. At least $t$ trustees must remain available after success, fewer than $t$ may collude before success, and key shares must be erased after failure. Each deployment must enforce and test these conditions through engineering controls, audits, and governance rules.

\subsection{Threats and Mitigations}\label{app:implementation_threats}

Existing components show that the core functions are feasible. A working design must also account for false identities and malicious participants, custodians who collude or become unavailable, compromised software and governance, and observable metadata.

\paragraph{Eligibility and malicious participation}
A Sybil attack occurs when one actor uses several identities and is counted several times \citep{douceurSybilAttack2002}. A fixed eligible population makes this threat easier to control than it is on the open internet, where the system must determine whether every account belongs to a different person. The architect fixes the eligibility rule, the registrar gives one credential to each verified person, and campaign-specific serial numbers let the collector reject repeated use. Auditable credential issuance, strong authentication, revocation of stolen credentials, and separation of registration from collection address the main risks. These controls prevent impersonation and duplicate counting, but authentication proves identity rather than motive. An eligible saboteur can still sign to push the list over the threshold and then recant. Proposition~\ref{prop:trust_recantation} shows how a sharp margin above the minimum safe size protects the remaining coalition when recanting mass is bounded; it does not solve the attacker's entry decision.

\paragraph{Custody, collusion, and availability}
A centralized custodian can leak names, invent or discard entries, suppress a successful list, or submit to coercion. Distributed trustees, signed submission receipts, an append-only log, and a quorum release rule divide these powers. But enough trustees may cooperate to open records early, while too few available trustees may block release after success. Choosing trustees from institutions with different interests, setting a quorum that tolerates some unavailable members, protecting key shares, and fixing legal duties and penalties in advance reduce both risks. Collusion-deterrent escrow protocols can also make premature cooperation less attractive to trustees \citep*{mangipudiCollusionDeterrentThresholdInformation2023}. The false-contract calculation in Appendix~\ref{app:false_contract} gives the minimum trust a signer needs to participate when she still doubts the advertised custody arrangement.

\paragraph{Software, keys, and governance}
Sound cryptography does not protect against compromised devices, poor key storage, software bugs, or an administrator who can change the program. A participant's device can reveal her action or trick her into signing a different statement. A bug can reject valid submissions or apply the wrong threshold. Conservative software design, independent security audits, open-source review, hardware-backed or multiparty key custody, and clear recovery procedures reduce these risks. The contract identifier should cryptographically bind the statement, roster rule, threshold, deadline, and release policy so that one administrator cannot change them after collection begins. Decentralized-finance smart contracts held about \$150 billion in total value as of April 2022, demonstrating that related infrastructure operates at large scale \citep{wernerSoKDecentralizedFinance2022}. Attacks on these systems have also revealed recurring failures in design, governance, dependencies, and implementation \citep*{rezaeiSoKRootCause2025}. Those failures supply a concrete checklist for design and audit.

\paragraph{Metadata and public observability}
Encryption hides a submission's contents, but it does not automatically hide its \emph{metadata}: where it came from, when it was sent, or even that it exists. If every encrypted submission appears as a public transaction, observers may infer the interim count or link a transaction to a participant without reading its contents. The system can reduce these disclosures by grouping submissions into batches, routing them through intermediaries, delaying their recording, or collecting them privately instead of on a public chain. Failure privacy can coexist with a public encrypted count. The article's stronger observation $\varnothing$ requires attempt privacy: the audience that can impose exposure costs sees no count, failure notice, or evidence of an attempted campaign. A system can therefore protect names without hiding the attempt.

Legal and institutional controls complement cryptography. Contracts can penalize misconduct, audits can expose violations of the release rule, and transparent open-source operation can build a strong reputation. The distributed design sharply reduces the authority given to any one party. Instead of trusting one custodian with the entire system, participants rely on narrower and auditable claims about credential issuance, software, logs, and a trustee quorum. No single person or organization can read submissions, control the count, and release the list alone.

\subsection{Networks and Deployment}\label{app:implementation_networks}

Cryptographic security does not by itself create a viable campaign. The architect and administrator must reach potential signers, earn their trust, protect them from the relevant audience, and control who learns that the campaign exists. Four different networks govern these tasks; they need not contain the same people or connect them in the same way. The \emph{recruitment network} determines which eligible people hear about the contract. The \emph{trust network} determines whether they believe the invitation and the custody arrangement. The \emph{exposure network} identifies who can impose material costs after names become public. The \emph{observation network} determines whether outsiders learn that a campaign was attempted or failed.

Dense, trusted groups can supply the vanguard. An architect can first secure commitments from protected or already-public supporters through professional, organizational, or social ties, then approach more distant eligible participants. Connections across departments or social circles determine how far recruitment can spread. In the model, these features determine vanguard mass $b(\theta)$, participation capacity $Y(\theta)$, and whether the threshold is reachable. A fragmented recruitment network reduces capacity even when many people privately support the statement.

The exposure network identifies who can punish a signer. Inside an organization, the relevant audience may be supervisors or close colleagues. In a public campaign, it may include employers, professional associations, or a wider political audience. The article's homogeneous exposure benchmark fits settings in which protection depends mainly on final coalition size. When network position, seniority, or subgroup membership changes protection, the heterogeneous technology in the companion paper is the relevant model.

Private recruitment and public observability are separate design choices. Trusted invitations can spread through the recruitment network while the outside audience sees no campaign. Public advertising or a visible progress counter may reach more people, but it also makes failure public and may reveal how close the campaign came to success. The benchmark therefore describes private recruitment around a credible vanguard, enough potential participation to reach the threshold in the state being studied, a vanguard that can clear the threshold in favorable states, and an information policy that hides failed lists and attempts from the exposure audience.

Implementation burden and distrust raise the signing friction $k$ by making participation harder or less attractive. Premature or partial disclosure is more damaging: it exposes signers before the coalition is large enough to protect them and breaks the mechanism studied in the paper. A working system must provide eligibility checks, secure custody, verifiable counting, and conditional release against the attacks it faces. Existing institutional and cryptographic tools provide close versions of these functions. Deployment security depends on how well they are combined and operated.

\ifSubfilesClassLoaded{%
  \bibliographystyle{aea}
  \bibliography{conditional_disclosure}
}{ }

\end{document}

\fi

%%%%%%%%%%%%%%%%%
\end{document}
%%%%%%%%%%%%%%%%%